\newtheorem{corollary}{Corollary}
\newtheorem{example}{Example}
\newtheorem{remark}{Remark}
\newtheorem{definition}{Definition}
\newtheorem{theorem}{Theorem}
\newtheorem{proposition}{Proposition}
\newtheorem*{lemma}{Lemma}
\newtheorem*{acknowledgements}{Acknowledgements}
\newcommand{\bydef}{:=}
\newcommand{\N}{\mathbb{N}}
\newcommand{\R}{\mathbb{R}}
\newcommand{\simplex}{\Delta}
\newcommand{\A}{\ensuremath{\mathbb{A}}\xspace}
\newcommand{\B}{\ensuremath{\mathbb{B}}\xspace}
\newcommand{\X}{\ensuremath{\mathbb{X}}\xspace}
\newcommand{\be}{\mathbf{e}}
\newcommand{\ba}{\widetilde{\mathbf{a}}} 
\newcommand{\bb}{\widetilde{\mathbf{b}}} 
\newcommand{\bF}{\mathbf{F}}
\newcommand{\bN}{\mathbf{N}}
\newcommand{\bM}{\mathbf{M}}
\newcommand{\bL}{\mathbf{L}}
\newcommand{\bA}{\mathbf{A}}
\newcommand{\bB}{\mathbf{B}}
\newcommand{\bP}{\mathbf{P}}
\newcommand{\bT}{\mathbf{T}}
\newcommand{\bX}{\mathbf{X}}
\newcommand{\bG}{\mathbf{G}}
\newcommand{\bzero}{\mathbf{0}}
\newcommand{\bu}{\mathbf{u}}
\newcommand{\bv}{\mathbf{v}}
\newcommand{\bp}{\mathbf{p}}
\newcommand{\bq}{\mathbf{q}}
\newcommand{\bI}{\mathbf{I}}
\newcommand{\bone}{\mathbf{1}}
\newcommand{\bx}{\mathbf{x}}
\newcommand{\E}{\mathbb{E}}
\newcommand{\SO}{\ensuremath{\mathsf{StO}}}
\newcommand{\SSO}{\ensuremath{\mathsf{St^2O}}}
\newcommand{\BSO}{\ensuremath{\mathsf{BStO}}}
\newcommand{\RCP}{\textsf{RCP}}
\newcommand{\K}{\ensuremath{\mathcal{K}}}
\newcommand{\G}{\ensuremath{\mathcal{G}}}
\newcommand{\fd}{\mathfrak{d}}
\title{
	On the stochastic evolution of finite populations
}
\author{Fabio A. C. C. Chalub$^*$} 
\address{
	$^*$Departamento de Matemática and Centro de Matemática e Aplicações, Faculdade de Ciências e Tecnologia, Universidade Nova de Lisboa, Quinta da Torre, 2829-516, Caparica, Portugal.}
\email{chalub@fct.unl.pt}
\author{Max O. Souza$^\dagger$}
\address{
	$^\dagger$Departamento de Matem\'atica Aplicada, Universidade Federal Fluminense, R. M\'ario Santos Braga, s/n, 22240-920, Niter\'oi, RJ, Brasil.
}
\begin{document}

\maketitle

\begin{abstract}
This work is a systematic study of  discrete Markov chains that are used to describe the evolution of a two-types population. Motivated by results valid for the well-known  Moran (M) and  Wright-Fisher (WF) processes, we define a  general class of Markov chains models which we term the Kimura class. It comprises the majority of the models used in population genetics, and we  show that many  well-known results valid for M and WF processes  are still valid in this class. In all Kimura processes, a mutant gene will  either fixate or become extinct, and we present a necessary and sufficient condition for such processes to have the probability of fixation strictly increasing in the initial frequency of mutants. This condition implies that there are WF processes with decreasing fixation probability --- in contradistinction to M processes which always have strictly increasing fixation probability. As a by-product, we show that an increasing  fixation probability  defines uniquely an   M  or WF  process which realises it, and that  any fixation probability with no state having trivial fixation can be realised by at least some WF process. These results are extended to a subclass of processes that are suitable for describing time-inhomogeneous dynamics.
We also discuss the traditional identification of frequency dependent fitnesses and pay-offs, extensively used in evolutionary game theory, the role of weak selection when the population is finite, and the relations between jumps in evolutionary processes and frequency dependent fitnesses.

\keywords{Stochastic processes; Population Genetics; Fixation probabilities; Perron-Frobenius property; Time-inhomogeneous Markov chains; Stochastically ordered processes}

\subjclass[2010]{
	92D15;   
	92D25;  
	15B51;   
	60J10;   
}
\end{abstract}

\section{Introduction}

The evolution of finite populations  is inherently prone to  stochastic effects, and these are enhanced if the population is small. Thus, the correct modelling of these effects is a key step in understanding the dynamics of such populations. 
 The Wright-Fisher process, one of the most prevalent models in mathematical population genetics,  was a watershed point, and has set up much of the current paradigm in modelling finite populations \citep{Fisher1,Wright1}. This process is a Markov chain with a finite number of states that are  distributed multinomially, and hence not easily amenable to analysis except in some special cases~\citep{Fisher1,Wright_1931}. As a simpler process that was assumed to capture the essential aspects of the Wright-Fisher dynamics,  Moran conceived a Birth-Death process for genetics, that is contemporaneously known as the Moran model~\citep{Moran}.  Later on, one can find a plethora of processes designed to model the evolution of genotype frequencies---see for instance~\citep{KimuraCrow,Ewens_2004,EthierKurtz,Charlesworth,Hartle_Clark} and references therein for many examples with different levels of rigour and generality.

We consider only populations with two types of individuals, and within this framework our archetypical examples will be the Moran and Wright-Fisher processes.  The former, as observed above, is a birth-death process, and hence one individual is replaced at each time step. On the other hand, the entire population is replaced at once in the latter. Both processes are  Markov chains with two absorbing states, which are the only stationary states.  They  also share the same diffusion approximation---up to rescaling---for large populations and weak-selection, which suggests that, at least in this regime, they are two sides of the same coin~\citep{McCandlishEpsteinPlotkin}. However, to the  best of our knowledge,  it has not been examined so far if these similarities extend outside this regime.

From a broader perspective, both Moran and Wright-Fisher processes --- when considered in a population comprising two types, which will denote by \A and \B --- belong to  a class of Markov chains that are characterised by two parameters: the population size $N$, and a vector of $N+1$ \textit{type selection probabilities},  where each entry  indicates how likely   type \A is to be chosen for reproduction depending on  its prevalence in the population --- and hence it accounts for the effects of natural selection in the model. Since mutations effects are not considered, the homogeneous states, i.e., states with full absence or full  prevalence of a given type, are absorbing states. In the former case, this type has become extinct, while in the latter it has fixated.  In addition,  every state of the population is accessible from any non-homogeneous state in a finite number of steps.  This class of models appears in so many instances, that is natural to name it: we will term it the \emph{Kimura} class.

The axiomatisation of this  class will be our starting point, and this will lead to a number of questions that seem to  be unnoticed in the literature:

\begin{description}
 
\item[\textsl{\textbf{Fitness and type selection probabilities}}:] 
One of the most typical modelling approach is to consider that the probability of a given type will be selected for reproduction is proportional to chosen function, that depends on the state of the population. These functions, one for each type, are a modelling proxy for their reproductive success, and they are usually identified with the fitness of the corresponding type. Here, we will be somewhat more careful, and will term them reproductive fitness.  In this setting, a natural question to consider is whether this notion  of fitness is consistent with  the classical one of  Darwinian fitness --- the ratio between the prevalence of a given type in two successive generations --- which may be computed directly from the statistical properties of the corresponding process. We will show that such  identification  is always consistent with the Wright-Fisher process, but not  with any birth-death process; in particular, it is not consistent with Moran processes.

\item[\textsl{\textbf{Qualitative properties of fixation}}:]	
For models without mutation, a  key quantity in understanding their evolutionary behaviour  is how likely a given type will fixate as a function of its current frequency in the population. For neutral processes,  the fixation probability of any type is given by the current  frequency of this type. In particular, the larger the fraction of individuals of a certain type, the larger is the fixation probability of this particular type. Such a monotonic behaviour of the fixation  probability is usually taken for granted (e.g.~\citep{der2011generalized,Nassar_Cook}) also outside the neutral regime.  This is certainly correct in the diffusive limit~\citep{ChalubSouza09b}, but otherwise this issue seems to be largely overlooked.  We point out that the explicit expression for the fixation probability for birth-death processes implies that it is strictly increasing in the initial frequency --- a phenomenon that we term \emph{regular evolution}. We then show that the Moran process is a universal regular process: given any increasing fixation vector there exists a unique choice of type selection probabilities for which the corresponding Moran process has the given fixation. We also show that the Wright-Fisher (WF) is a universal process as far as fixation of probability is concerned. Namely, given an admissible fixation vector $\bF$, i.e. a fixation for which the only entry with a zero is the  first, and  the only  entry with a one is the last, then there exists a WF process that  has $\bF$ as a fixation vector. Furthermore, if the fixation is increasing,  this process is unique. This shows, at least in some cases, that the fixation probability can completely characterise the process, which extends the characterisation of neutrality through fixation probabilities as done in, e.g., \citep{Kimura_1962,Hartle_Clark,Fontdevila}.

\item[\textsl{\textbf{Time-inhomogeneous evolution}}:]
All processes described thus far are homogeneous in time. Nevertheless, real environments are not static, and the understanding of  evolutionary dynamics in this changing scenario provides a new set of challenges, as evidenced by the recent results of~\citet{Ashcroft:etal:2014,Melbinger:Vergassola:2015,Uecker,Feldman,Cvijovic}. While matrices in the Kimura class have the appropriate stochastic properties to describe  a single step of evolutionary processes, we  show that it lacks the appropriate structure to describe time-inhomogeneous processes, as it is not closed by multiplications. Therefore, it needs to be restricted, and at this point we will introduce the Gillespie class, $\G$, which is still large enough to include all previous examples. This new class is a convex semigroup --- i.e. it is closed by convex combinations and products---and it corresponds to processes whose transient dynamics is given by a totally indecomposable matrix rather than an irreducible one.

\end{description}

 In Section~\ref{sec:MandWF} we introduce the  Kimura class, \K, and in particular the concept of type selection probability. We also point out that both Wright-Fisher and Moran processes belong to this general class. We then introduce the parametrisation of type selection probabilities by reproductive fitness, and show that these are consistent with  Darwinian fitness  if, and only if, the type selection probabilities are the expected frequency in the next generation. It turns out that this condition is satisfied by the Wright-Fisher process but not by any birth-death process. We finish this section showing a few general results on the fixation probability that while seem to be known are not conveniently available. 

In Section~\ref{sec:regular}, we introduce the concept  of regular evolution as discussed above and three classes of progressively more general ordered matrices: strictly stochastically ordered, banded stochastically ordered and stochastically ordered. We characterised the regular process as all processes that are eventually stochastically ordered. Indeed, we show that regularity is equivalent to a certain transformed matrix of the process having the so-called Perron-Frobenius property.

In Section~\ref{sec:fix_reg_classical}, we study the Moran process and, by means of the expression for the fixation probability of a birth-death process, we conclude that it is regular. We also show that, given a increasing fixation vector, there exists a unique choice of type selection probabilities for which the corresponding Moran process has this fixation vector. In the sequel, we proceed to the study of the Wright-Fisher process, and show that being regular is equivalent to have the type selection vector increasing and also equivalent to being strictly stochastically ordered. As a by product, we show that for any  admissible fixation, there exists a choice of type selection probabilities --- not necessarily unique --- such that this fixation is realised by the corresponding Wright-Fisher process. Uniqueness of such a choice holds, if the fixation vector is increasing. The possibility of non-regular evolution for the Wright-Fisher process is further investigated within the realm of evolutionary game theory. In this vein, we prove that type selection probabilities given by two-player game theory are always increasing, with the usual identification of fitnesses and pay-offs; hence the corresponding process is always regular. On the other hand, as soon as we move to three-player games we can find examples of non-increasing type-selection probabilities, and hence non-regular Wright-Fisher dynamics. Finally, we give a complete characterisation of regularity for matrices in the Kimura class.  We finish with the study of what happens with type selection probabilities, if we assume that fixation is given by a smooth function, and the population is large, but still finite.

In Section~\ref{sec:inhomogeneous} we deal with time-inhomogeneous processes. We introduce  the Gillespie class, $\G$, which is contained in the Kimura class, and show that it is closed under products and convex combinations, and thus it is a convex semigroup. We then study how time-inhomogeneity  might affect regularity. 
We show that  a time-inhomogeneous Wright-Fisher process that is locally regular --- i.e. a process such that every transition matrix between two consecutive time steps is given by a regular Wright-Fisher process --- is itself regular.
On the other hand, no such a result holds for the Moran process (or birth-death processes in general). Indeed, we provide an example of two Moran matrices whose product is non-regular.  This yields a deterministic version  of Parrondo's paradox.  

We finish with a discussion of the results in Section~\ref{sec:discussion}.

\begin{table}
\centering
\begin{tabular}{p{5cm}|p{5cm}|c}
\textbf{Name} & \textbf{Symbol} & \textbf{Definition}\\
\hline
Kimura matrix & $\bM\in\K $& Def.~\ref{def:Kimura}\\
Regular process & $\bM$ & Def.~\ref{def:regular}\\
Gillespie matrix & $\bM\in\G$ & Def.~\ref{def:Gillespie}\\
Core matrix & $\widetilde{\bM}$ & Def.~\ref{def:Kimura}\\
Associated matrix& $\bL$& Subsec.~\ref{ssec:regular_equiv}\\
Fixation probability & $\bF$ & Eq.~(\ref{eq:Fisfix})\\
Type Selection Probability (TSP) & $\bp$ (full population), $\bq$ (restricted population) & Subsec~
\ref{ssec:natural_class}\\
(Strictly, banded) Stochastic ordered matrices & (\SSO, \BSO) \SO & Def.~\ref{def:som}\\
Darwinian fitness & $\Psi$ (absolute) and $\Phi$ (relative) & Def.~\ref{deff:dawin_fit}\\
Reproductive fitness & $\psi$ & Eq.~(\ref{eq:p_and_fit_rep})\\
\end{tabular}
\caption{Table with some important notation introduced in this work.}
\label{table:notation}
\end{table}

\section{The basic set-up}
\label{sec:MandWF}

We use italics to denote real numbers, while  boldface denote either vectors in $\R^{N+1}$ or matrices in $\R^{(N+1)\times (N+1)}$. Furthermore, vectors in $\R^{N-1}$ and matrices in $\R^{(N-1)\times(N-1)}$ are denoted by bold-tilded symbols.  There are two exceptions to these conventions: (i) the associated matrix $\bL$ is defined in $\R^{N\times N}$; (ii) all null vectors are written without tildes. For the convenience of the reader, Table~\ref{table:notation} summarises the notation used throughout the text.

We use the probabilist convention that vectors are row matrices. Hence matrices act on vectors through multiplication from the right, and on  transposed vectors through multiplication from the left.

Finally, recall that a non-negative matrix $\mathbf{A}\in\R^{N\times N}$ is \emph{stochastic} if $\sum_{j=1}^NA_{ij}=1$ for all $i$ and that it is \emph{sub-stochastic}, if $\sum_{j=1}^NA_{ij}\le 1$ for all $i$, with strict inequality holding for at least one value of $i$.

\subsection{A natural class of Markov chains}
\label{ssec:natural_class}

 Consider a population of two types denoted by \A and \B, respectively, with fixed size $N\in\N$, which evolves in the absence of mutation. We will assume that the population dynamics is described by a discrete time Markov chain, with time-homogeneous transition probabilities, and where the chain is in state $j$, if there are $j$ individuals of type \A in the population.

As is well known \citep{Karlin_Taylor_first,Karlin_Taylor_intro}, such a chain can be completely described by specifying the transition probability from state $i$ to $j$, which we will denote by $M_{ij}$, $i,j=0,\ldots,N$. We will further assume that $M_{ij}=M(i,j,\bp,N)$, 
where $\bp\in\R^{N+1}$ is a vector of probabilities such that $p_j\in[0,1]$ describes the probability of a type \A individual being  selected  for reproduction, with the chain in state  $j$. We will term $\bp$ the vector of \textit{type selection probabilities} (TSP).

 Since there is no mutation, we have  that
 \[
 M_{0i}=\left\{\begin{array}{ll}
 1,& i=0\\ 0,&i=1,\ldots,N
 \end{array}\right.
 \quad\text{and}\quad
 M_{Ni}=\left\{\begin{array}{ll}
 0,&i=0,\ldots,N-1\\
 1,& i=N
 \end{array}\right..
 \]
 In agreement with this interpretation of $\bp$, we will always have $p_0=0$, $p_N=1$, and $0<p_i<1$ for $i=1,\dots,N-1$, unless stated otherwise. Furthermore, we assume that all sates are accessible to the dynamics, from any non absorbing initial condition, in a sufficiently large number of steps. 
 
All these assumptions combined lead to the following definition:

\begin{definition}[The Kimura class of matrices]
	
	\label{def:Kimura}
Let $\bM$ be a $(N+1)\times(N+1)$ stochastic matrix. We say that  $\bM$ is  \emph{Kimura}, if
 \begin{equation}\label{eq:Mdeff}
 \bM=
 \begin{pmatrix}
 1&\bzero&0\\
 \ba^\dagger&\widetilde{\bM}&\bb^\dagger\\
 0&\bzero&1
 \end{pmatrix},
 \end{equation}
where $\widetilde{\bM}$ is a $(N-1)\times(N-1)$ sub-stochastic irreducible matrix, $\bzero$ is the zero vector in $\R^{N-1}$, and with $\ba$ and $\bb$  non-zero, non-negative  vectors in $\R^{N-1}$. We  will also say that  $\widetilde{\bM}$ is the \emph{core} matrix of $\mathbf{M}$, and that $\ba$ is the vector of one-step extinction rate, and $\bb$ is the vector one-step fixation rate.	The class of Kimura matrices is denoted by \K.

\end{definition}
Notice that  since the sum of an irreducible matrix with a non-negative matrix is itself irreducible, we immediately have that the Kimura class is convex.

\begin{remark}
We termed the matrices in definition~\ref{def:Kimura} after the celebrated sentence from the geneticist Motoo Kimura that states that ``A  mutant gene which appeared in a finite population will eventually either be lost from the population or fixed (established) in it''~\citep{Kimura_Ohta_1969}. The Kimura theory, however, is also know from the importance given to the neutral and quasi-neutral evolution (see~\citep{Kimura1983}). Here, we depart as much as possible from the neutral theory.
\end{remark}

This class can be  described as time homogeneous Markov chains with two absorbing states, and in which all states are reachable from all non-absorbing states in a fixed and finite number of steps. It is implicitly the standard class used in many applications (see, e.g., the discussion on the Wright-Fisher model without mutations at~\citet{Charlesworth} or, more generally, Markov models with absorbing states at~\citet{Karlin_Taylor_intro}). When considering time-inhomogeneous processes, this class will have to be restricted; see section~\ref{sec:inhomogeneous}.

Finally, we note that in this work, all $(N+1)\times(N+1)$ transition matrices, except if otherwise stated, will be Kimura.

\subsection{Two archetypical evolutionary processes}
\label{ssec:archetypical}

Two important evolutionary processes that fit the framework given above are the generalised {Moran and generalised Wright-Fisher processes, that are defined as follows:

\begin{description}
\item[Moran process (M)]
The Moran process is  a death-birth process \citep{Moran} --- hence with overlapping generations. Given a population with $N$ individuals at state $i$,  one individual is selected to die with probability $1/N$, and he/she is replaced by an individual of type \A with a  probability $p_i$ and a type \B individual with probability $1-p_i$. Therefore, the transition matrix is given by
\begin{equation*}
M_{ij}=\left\{
\begin{array}{ll}
\frac{i}{N}(1-p_i)\ ,\quad&i=j+1\ ,\\
\frac{i}{N}p_i+\frac{N-i}{N}(1-p_i)\ ,&i=j\ ,\\
\frac{N-i}{N}p_i\ ,&i=j-1\ ,\\
0\ ,&|i-j|>1\ .
\end{array}
\right.
\end{equation*}

\item[Wright-Fisher process (WF)]

The Wright-Fisher model is an evolutionary  process with non-overlapping generations \citep{Fisher1,Fisher2,Wright1,Wright2}. Given  a population of constant size $N$ at state $i$, it is replaced by a new population, where the probability that the new generation is at state $j$ is given by  
\[
M_{ij}=\binom{N}{j}p_i^j(1-p_i)^{N-j}\ .
\]
\end{description}

Let $X_k$  be the number of type \A individuals in the population, under the corresponding evolutionary process. Then, the expected number of type \A individuals, given the state $i$, in the next step is given by
$\E\left[X_{k+1}|X_k=i\right]= \sum_{j=0}^NjM_{ij}$.

\begin{definition}
We say that an evolutionary process  is \emph{neutral} if the corresponding population process, $X_k$, satisfies $\E\left[X_{k+1}|X_k=i\right]=i$, for all $i$.
Indeed, neutrality as defined by means of an  exchangeable model  is equivalent, for a two types populations, to the property that the population process of a given type is a martingale---cf. \citep{Cannings:1974,Cannings:1975}.
\end{definition}

Notice that some references define neutrality from properties of the type selection probability vector $\bp$~\citep{Ewens_2004,Burger_2000,der2011generalized} or from the fixation probability always being equal to the initial frequency~\citep{Kimura_1962,Hartle_Clark,Fontdevila}. Here we opt for the definition from the Darwinian fitness, i.e, that their conditional expected frequency is the same as the observed one~\citep{Nowak:06,gillespie1991causes}. For a population with two types, these definitions are equivalent, as we will see below.

For non-neutral processes, the expected number of type \A individuals after one time-step will be different from its current frequency. Indeed, for the Moran process we have that 
\begin{align}
\nonumber
		\E\left[X^{M}_{k+1}|X^{\mathrm{M}}_k=i\right]&=(i-1)M_{i,i-1}+iM_{ii}+(i+1)M_{i,i+1}=\\
		\nonumber
 		&=\frac{i(i-1)+(N-i)i}{N}+\frac{i^2+(i+1)(N-i)-i(N-i)-(i-1)i}{N}p_i\\	
		&=i+p_i-\frac{i}{N}\ .\label{eqn:exp_m}
		\end{align}
In the case of the WF process, the calculation is immediate from known properties of the binomial distribution: 
\begin{equation}\label{eqn:exp_wf}
 \E\left[X^{WF}_{k+1}|X^{\mathrm{WF}}_k=i\right]=Np_{i}.
\end{equation}

Notice that Equation~\eqref{eqn:exp_wf} further implies, for the WF process, that $p_{i}$ is the expected frequency of type \A in the population in the next generation, given that in the present generation such frequency is $\sfrac{i}{N}$.

\subsection{Type selection probabilities and Darwinian fitness}	

As observed by \citet{Orr:2009}, biologists broadly agree on the essence of the idea of fitness, although they give many different definitions. Fundamentally, fitness measures the ability of individuals to survive and reproduce in some environment. This can be measured through the expected number of type \A individuals in the next generation, from a given state--- this is the \emph{Darwinian fitness} \citep{Hartle_Clark}. On the other hand, in many different modelling approaches --- as in evolutionary game theory, where fitness is identified with the game pay-off --- fitness is used as a proxy to the probability to select a given type, when determining the next reproduction event; see, for example~\citep{Imhof:Nowak:2006}. Here, we call this function \emph{reproductive fitness}.

  Let us denote the reproductive fitnesses by $\varphi^{(\A,\B)}:\{0,1,\dots,N\}\to\R_+$. Then, these  fitness functions are typically related to the type selection probabilities by
  \begin{equation}\label{eq:p_and_fit_rep}
  \bp_{i}=\frac{i\varphi^{(\A)}(i)}{i\varphi^{(\A)}(i)+(N-i)\varphi^{(\B)}(i)}.
  \end{equation}
This assumes that the population is well-mixed, and it is in agreement with the  intuitive idea that   $\sfrac{\varphi^{(\A)}}{\varphi^{(\B)}}$  indicates the probability to be selected for reproduction.

Equation~(\ref{eq:p_and_fit_rep}) is widely used in many works dealing either with the Wright-Fisher process (see, e.g.,~\citep{Imhof:Nowak:2006,AntalScheuring,TraulsenPachecoImhof,ChalubSouza_2016}) or specifically with the Moran process \citep{Nowak:etal:2004,Nowak:06,ChalubSouza09b,TraulsenPachecoImhof,Fudenbergetal_2006}.  Nevertheless, we will now see that the identification between Darwinian and reproductive fitnesses is consistent  only with the Wright-Fisher process for all possible cases.  	We follow~\cite{Hartle_Clark} in the definition below. See also~\citep{KimuraCrow,MaynardSmith}

\begin{definition}\label{deff:dawin_fit}
Let $X_k^*$ be a stochastic  population process, with finite population $N$. Let us define the (discrete time) \emph{Darwinian fitness} fitness as
\[
 \Psi^\A(i)\bydef\frac{\E\left[X_{k+1}|X_k=i\right]}{i}\ ,\qquad 
 \Psi^\B(i)\bydef\frac{\left(N-\E\left[X_{k+1}|X_k=i\right]\right)}{(N-i)}
\]
The relative Darwinian fitness is 
\[
 \Phi(i)\bydef\frac{\Psi^\A(i)}{\Psi^\B(i)} .
\]
\end{definition}

\begin{lemma}
	An evolutionary stochastic process with finite population is consistent both  with reproduction fitness  and with  Darwinian fitness if, and only if, the corresponding population process satisfies:
	\[
	\E\left[X_{k+1}|X_k=i\right]=Np_{i}.
	\]
\end{lemma}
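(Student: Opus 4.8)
The plan is to unpack both fitness notions in terms of the conditional expectation $\E[X_{k+1}\mid X_k=i]$ and show that ``consistency'' forces precisely the stated identity. The key observation is that both notions assign a selection probability to type \A at state $i$, and the two must agree at every non-absorbing state $i=1,\dots,N-1$ for the process to be simultaneously consistent with both.

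First I would write out the reproductive fitness side. By Equation~\eqref{eq:p_and_fit_rep}, the type selection probability is $p_i = \frac{i\varphi^{(\A)}(i)}{i\varphi^{(\A)}(i)+(N-i)\varphi^{(\B)}(i)}$, so that the ratio of reproductive fitnesses satisfies
\[
\frac{\varphi^{(\A)}(i)}{\varphi^{(\B)}(i)} = \frac{N-i}{i}\cdot\frac{p_i}{1-p_i}.
\]
Next I would write out the Darwinian side. From Definition~\ref{deff:dawin_fit}, setting $m_i \bydef \E[X_{k+1}\mid X_k=i]$, the relative Darwinian fitness is
\[
\Phi(i) = \frac{\Psi^\A(i)}{\Psi^\B(i)} = \frac{m_i/i}{(N-m_i)/(N-i)} = \frac{N-i}{i}\cdot\frac{m_i}{N-m_i}.
\]
The natural reading of ``consistent with both'' is that the process treats reproductive and Darwinian fitness as the same proxy for selection, i.e.\ $\Phi(i)=\varphi^{(\A)}(i)/\varphi^{(\B)}(i)$ for all $i=1,\dots,N-1$. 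Equating the two displayed ratios, the common factor $\frac{N-i}{i}$ cancels, leaving $\frac{m_i}{N-m_i} = \frac{p_i}{1-p_i}$. Since the map $t\mapsto t/(1-t)$ (equivalently $s\mapsto s/(N-s)$ after rescaling) is strictly monotone, this forces $m_i = Np_i$, which is exactly the claimed identity. The converse is immediate: if $m_i=Np_i$, substituting back makes $\Phi(i)$ equal to the reproductive-fitness ratio at every state, and the boundary states $i=0,N$ are handled trivially since both sides are determined by absorption.

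The main obstacle I anticipate is not computational but definitional: making precise what ``consistent with both'' means and ensuring the equivalence is genuinely an ``if and only if'' rather than a one-directional implication. I would be careful to state that consistency means the relative Darwinian fitness $\Phi$ coincides with the reproductive-fitness ratio $\varphi^{(\A)}/\varphi^{(\B)}$ entering \eqref{eq:p_and_fit_rep}, and to verify the algebraic step linking $m_i=Np_i$ to this coincidence is reversible at each interior state, using that the cross-multiplied identity $m_i(1-p_i)=(N-m_i)p_i$ simplifies directly to $m_i=Np_i$ without dividing by a possibly-vanishing quantity. I would also note that \eqref{eqn:exp_wf} shows the Wright-Fisher process satisfies $m_i=Np_i$ identically, whereas \eqref{eqn:exp_m} gives $m_i = i + p_i - i/N \neq Np_i$ in general for the Moran process, which is how this Lemma delivers the paper's claimed dichotomy between WF and birth-death dynamics.
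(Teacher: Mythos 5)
Your proposal is correct and follows essentially the same route as the paper: both compute the relative fitness $\Phi(i)$ once from the Darwinian definition and once from the reproductive-fitness parametrisation of $p_i$, equate the two expressions, cancel the common factor $\frac{N-i}{i}$, and invert the resulting identity to obtain $\E[X_{k+1}\mid X_k=i]=Np_i$. Your additional remarks on the reversibility of the algebra and the boundary states are sound but not needed beyond what the paper's two-line argument already establishes.
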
	
\begin{proof}
	From the Darwinian fitness definition, we have
	\[
	\Phi(i)=\frac{N-i}{i}\frac{\E\left[X_{k+1}|X_k=i\right]}{N-\E\left[X_{k+1}|X_k=i\right]}
	\]
	while from the reproductive fitness parametrisation, we have that
	\[
	\Phi(i)=\frac{N-i}{i}\frac{p_{i}}{1-p_{i}},
	\]
	and the result follows.
\end{proof}

\begin{remark}
	The above result together with formulas~(\ref{eqn:exp_m}) and~(\ref{eqn:exp_wf}) implies that the Moran process is consistent  with both reproductive fitness  and with  Darwinian fitness if, and only if, it is neutral. More precisely, when using the Moran process to describe the evolution of a population,  we may not identify  the Darwinian fitness and the reproductive fitness. In this vein, there are two possibilities: alternatively	
		\begin{enumerate}
			\item One can use the reproductive fitness for modelling,  but should not expect that the type selection probabilities determine the expected value of type \A in the next generation.
			\item One can  calibrate a Moran model by the Darwinian fitness. However, in this case the type selection probabilities are given by 
					\begin{equation}\label{eq:Mp}
					p_{i}=\frac{i}{N}+\frac{i(N-i)(\Phi(i)-1)}{N-i+i\Phi(i)}.
					\end{equation}
		\end{enumerate}
\end{remark}

 Local in time maximization of reproductive fitness implies maximization of the fixation probability if we are in the weak selection regime and if population is large. See, e.g., \citep{ChalubSouza09b} for the Moran process and~\citep{ChalubSouza14a} for the Wright-Fisher dynamics, where it is shown that (using game-theory vocabulary) Nash-equilibrium strategy maximizes fixation probability, using the traditional identification between pay-offs and reproductive fitness. If the population is small, however, the situation can be strikingly different, as shown by the example below. 
 \begin{example}
 Consider the Public Good Game: $N$ players have to contribute or not one monetary unit for a common pool, which is afterwards multiplied by $r>0$ and split equally among all players, irrespectively of their personal strategy, cf.~\citep{Archetti_JTB2012}. For any value of $r$, donors have a smaller pay-off of exactly one unity than non-donors, and therefore any evolutionary dynamics that equals reproductive fitness with pay-off will favour the fixation of non-donors. However the behaviour of \emph{rational} (i.e., pay-off maximizer) players will depend on the value of $r$. If $r<N$ (the most traditional setting) rationally players will not contribute, while for $r>N$ rational players will contribute as the net return is $\sfrac{r}{N}>1$ per monetary unity. Therefore, evolutionary dynamics will result in a population of non-rational individuals, i.e., the final outcome of the evolutionary dynamics will be a non-Nash equilibrium. Note that in this simple setting, the meaning of ``small'' and ``large'' population is clear, as the multiplicative parameter $r$ introduces a natural scale in the model. This model is an example of Hamiltonian spite~\citep{Hamilton_1970} in which individuals maximize their fixation probability  minimizing their pay-off.
\end{example}

\subsection{Fixation probabilities}

	We recall that the vector of fixation of probabilities is defined by
		\begin{equation}\label{eq:Fisfix}
		F_i=\lim_{\kappa\to\infty}\langle \be_i\bM^\kappa,\be_N\rangle.
		\end{equation}
		The entry $F_i$  is the fixation probability of type \A from a population with $i$ individuals of type \A.
		
		We also recall---cf. \citep{Karlin_Taylor_first,Karlin_Taylor_intro}---that 
			\begin{equation}\label{eq:limit}
			\bM^\infty\bydef		\lim_{k\to\infty}\bM^k=(\bone-\bF)\otimes\be_0+\bF\otimes\be_N.
			\end{equation}

Another characterisation of the fixation vector is as a left eigenvector of the transition matrix associated to the eigenvalue one, with the first entry being zero,  and normalised such that the last entry is one \citep{Karlin_Taylor_intro,Karlin_Taylor_first}. It can be also characterised in a more algebraically fashion---cf. \citep{GS:1997,Karlin_Taylor_first}---as follows
\begin{proposition}
	\label{prop:fixationvector}
	Let $\bM\in\K$. Then, there exists a unique vector $\widetilde{\bF}\in\R^{N-1}$, with $0<\widetilde{F}_i<1$,  such that $\bF^=\begin{pmatrix}0&\widetilde{\bF}&1\end{pmatrix}$, with $\bM\bF^\dagger=\bF^\dagger$ and 
	\[
	\widetilde{\bF}^{\dagger}=\left(\bI-\widetilde{\bM}\right)^{-1}\bb^{\dagger}.
	\]

\end{proposition}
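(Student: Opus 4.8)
The plan is to turn the right-eigenvector equation $\bM\bF^\dagger=\bF^\dagger$ into a single linear system on the core and then read off all the asserted properties from the structure of $\widetilde{\bM}$. Writing $\bF^\dagger=\begin{pmatrix}0&\widetilde{\bF}&1\end{pmatrix}^\dagger$ as a column vector and multiplying by the block form~\eqref{eq:Mdeff}, the first and last rows give $0=0$ and $1=1$, so the boundary entries $F_0=0$, $F_N=1$ are consistent for free and impose no constraint. The middle block $\begin{pmatrix}\ba^\dagger&\widetilde{\bM}&\bb^\dagger\end{pmatrix}$ applied to $\begin{pmatrix}0&\widetilde{\bF}&1\end{pmatrix}^\dagger$ produces $\widetilde{\bM}\widetilde{\bF}^\dagger+\bb^\dagger$, which must equal $\widetilde{\bF}^\dagger$. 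Hence the entire eigenvalue problem collapses to
\[
(\bI-\widetilde{\bM})\widetilde{\bF}^\dagger=\bb^\dagger,
\]
and both existence/uniqueness and the closed-form expression will follow the moment $\bI-\widetilde{\bM}$ is shown invertible.

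The central step is to prove $\rho(\widetilde{\bM})<1$. Since $\widetilde{\bM}$ is sub-stochastic, all its row sums are $\le 1$ and at least one is strictly less than $1$; combined with irreducibility this forces the spectral radius strictly below one. Concretely, I would argue by contradiction: if $\rho(\widetilde{\bM})=1$, Perron--Frobenius supplies a strictly positive left eigenvector $\bu$ with $\bu\widetilde{\bM}=\bu$. Pairing with $\bone^\dagger$ gives $\bu\widetilde{\bM}\bone^\dagger=\bu\bone^\dagger$; but $\widetilde{\bM}\bone^\dagger$ is the vector of row sums, which is $\le\bone^\dagger$ with a strict inequality in at least one coordinate, so $\bu\widetilde{\bM}\bone^\dagger<\bu\bone^\dagger$ because $\bu>0$ — a contradiction. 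Thus $\bI-\widetilde{\bM}$ is invertible and $\widetilde{\bF}^\dagger=(\bI-\widetilde{\bM})^{-1}\bb^\dagger$ is the unique solution.

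For the sign conditions I would exploit the Neumann series $(\bI-\widetilde{\bM})^{-1}=\sum_{k\ge 0}\widetilde{\bM}^k$, which converges since $\rho(\widetilde{\bM})<1$. As $\widetilde{\bM}\ge 0$ is irreducible, this inverse is entrywise strictly positive: the $k=0$ term covers the diagonal, and for each off-diagonal entry irreducibility supplies a power of $\widetilde{\bM}$ with a positive entry. Since $\bb$ is non-negative and non-zero, $\widetilde{\bF}^\dagger=(\bI-\widetilde{\bM})^{-1}\bb^\dagger>0$, giving $\widetilde{F}_i>0$. For the upper bound I would introduce the complementary extinction vector $\widetilde{\mathbf{E}}^\dagger=(\bI-\widetilde{\bM})^{-1}\ba^\dagger$, which is likewise strictly positive because $\ba$ is non-negative and non-zero. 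Stochasticity of $\bM$ means each middle row obeys $a_i+(\widetilde{\bM}\bone^\dagger)_i+b_i=1$, i.e.\ $\ba^\dagger+\bb^\dagger=(\bI-\widetilde{\bM})\bone^\dagger$; multiplying by $(\bI-\widetilde{\bM})^{-1}$ yields $\widetilde{\bF}+\widetilde{\mathbf{E}}=\bone$. Since $\widetilde{\mathbf{E}}>0$, we conclude $\widetilde{F}_i=1-\widetilde{E}_i<1$, which closes the argument.

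The step I expect to be the main obstacle is the strict bound $\rho(\widetilde{\bM})<1$: sub-stochasticity by itself only yields $\rho\le 1$, and it is precisely the interplay of irreducibility with the single deficient row that upgrades this to a strict inequality. Everything else is bookkeeping on the block structure, but this Perron--Frobenius input must be handled with care, as it is exactly the hypothesis that rules out a stochastic invariant sub-block carrying an eigenvalue equal to one.
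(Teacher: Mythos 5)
Your proof is correct and follows essentially the same route as the paper's: reduce the eigenvector equation to $(\bI-\widetilde{\bM})\widetilde{\bF}^\dagger=\bb^\dagger$ on the core, invoke positivity of $(\bI-\widetilde{\bM})^{-1}$ for the lower bound, and run the complementary computation with $\ba$ (equivalently, with $\bone-\bF$) for the upper bound. The only difference is that you prove the key input---$\rho(\widetilde{\bM})<1$ and entrywise positivity of the Neumann series for a sub-stochastic irreducible matrix---from scratch, whereas the paper simply cites Berman and Plemmons for it; your self-contained argument is sound.
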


\begin{proof}
	A straightforward computation shows that $\bM\bF^{\dagger}=\bF^{\dagger}$ is equivalent to 
	\[
\begin{pmatrix}
 0\\
 \widetilde{\bM}\widetilde{\bF}^{\dagger}+\bb^{\dagger}\\
 1
\end{pmatrix}
=
\begin{pmatrix}
 0\\
 \widetilde{\bF}^{\dagger}\\
 1
\end{pmatrix}.
	\]
	This will be satisfied if, and only if, we have
	\begin{equation}\label{eq:Ffromb}
	\left(\bI-\widetilde{\bM}\right)^{-1}\bF^{\dagger}=\bb^{\dagger}.
	\end{equation}
	Since $\widetilde{\bM}$ is sub-stochastic and irreducible, $\bI-\widetilde{\bM}$ is invertible and $\left(\bI-\widetilde{\bM}\right)^{-1}$ is positive---cf. \citep{BP79}. Hence $\widetilde{\bF}$ is uniquely defined, and positive. A similar calculation with $\bone-\bF$ instead of $\bF$ yields
	\begin{equation}\label{eq:Ffroma}
\left(\bI-\widetilde{\bM}\right)\left(\bone-\widetilde{\bF}\right)^{\dagger}=\ba^\dagger.
	\end{equation}
	Hence, using the same argument as above, we conclude that  $\bone-\widetilde{\bF}$ is positive, and hence that  we have $\widetilde{F}_i<1$.

\end{proof}

The vector $\bF$ and, by extension, the vector $\widetilde{\bF}$ are called the \emph{fixation vectors} associated to the process $\bM$.

\begin{definition}[Admissible fixation vector]
A fixation vector $\bF$ satisfying $0<F_i<1$, $i=1,\dots,N-1$, is termed admissible. Proposition~\ref{prop:fixationvector} then states that any fixation vector from a process whose transition matrix is Kimura is admissible. 
\end{definition}

\begin{remark}\label{rmk:L1}
It is possible  for a process not in the Kimura class to fixate.  Indeed, let $\K_0$ the set of matrices with the form given by Eq.~(\ref{eq:Mdeff}) with $\rho(\widetilde{\bM})<1$, where $\rho(\bA)$ denotes the spectral radius of $\bA$. Then it is easy to see that a process without mutation fixates if, and only if, its transition matrix belongs to $\K_0$, although  the corresponding fixation may not be admissible. On the other hand, if we write $\K_1\subset\K_0$ with $\ba,\bb>\bzero$, then every matrix in $\K_1$ has an admissible fixation.  It can be also shown that $\K_1$ is closed through convex combinations and multiplications.  Although $\K_0$ and $\K_1$ are not our primary interest in this work, most of the results presented here could be easily extended to $\K_1$, and some even to $\K_0$. In subsection~\ref{ssec:further}, we study examples that belong to these classes. 
\end{remark}

\section{Regular Fixation}
\label{sec:regular}

\subsection{Preliminary definitions and results}

We say that $\bu\in\R^N$ is \emph{increasing} (non decreasing) if for all $i>j$, we have  $u_i>u_j$ ($u_i\ge u_j$).

\begin{definition}[Regular and weakly-regular processes]\label{def:regular}
	We say that an  evolution process such that the transition matrix belongs to the Kimura class is  \textbf{regular} (\textbf{weakly regular}), if the associated  fixation vector is  increasing (non-decreasing, respect.). By extension, we shall say that an  increasing (a non-decreasing, respect.) fixation vector is a regular (weakly-regular, respect.) fixation. Notice that any regular fixation is necessarily admissible. 
\end{definition}

\begin{remark}
	A regular evolutionary process conforms to the intuitive idea  that the probability of fixation of a given type  increases when  the number of individuals of this type increases---cf. \citep{der2011generalized,McCandlishEpsteinPlotkin,tan2012monotonicity}. 
\end{remark}

We begin by giving  a sufficient condition for a process to be regular. In order to do so,  we will need the concept of stochastic ordering of probability vectors:

\begin{definition}[Vector stochastic ordering]
	\label{def:sov}
	We say that two vectors $\bu,\bv\in\simplex^{N}\bydef\{\bx\in\R^{N+1}|x_i\ge 0,\sum_ix_i=1\}$ are \emph{stochastically ordered}, $\bu\succ\bv$, if for all $n=1,\dots,N$, we have that $\sum_{i=n}^{N}u_i\geq \sum_{i=n}^{N}v_i$. If all inequalities are strict, then we say $\bu\succ\!\!\succ\bv$.
\end{definition}
Matrices whose rows are stochastically ordered will also be relevant:
\begin{definition}[Ordered matrices]
 \label{def:som}
 Consider a $N\times N$ matrix $\bA$. We say that $\bA$ is \emph{stochastically ordered} (SO, $\bA\in\SO_N$) if all row vectors are stochastically ordered, i.e., if for all $i>j$, we have that $\bA_{i,\cdot}\succ\bA_{j,\cdot}$. 
	We say that $\bA$ is \emph{strictly stochastically ordered} (SSO, $\bA\in\SSO_N$) if for all $i>j$, we have that $\bA_{i,\cdot}\succ\!\!\succ\bA_{j,\cdot}$. 
	Finally, we say that  $\bA$ is a \emph{banded stochastically ordered matrix} (BSO, $\bA\in\BSO_N$), if it is stochastically ordered, and if $\sum_{j=n}^{N}A_{i+1,j}> \sum_{j=n}^{N}A_{i,j}$  for $n\in\{i+1-k_1,\dots,i+1+k_2\}\cap\{1,\dots,N\}$ for $k_1,k_2,\geq1$. We say that the matrix $\bA$ has \emph{upper (lower) band} of size $k_1$ ($k_2$, respec.). If $k_1=k_2$, we say that $\bA$ has a band of size $k_1$. 
\end{definition}

\begin{definition}[Eventually ordered matrices]
We say that a $N\times N$ matrix $\bA$ is \emph{eventually strictly stochastically ordered} (\emph{stochastically ordered}) if there exists $k_0\in\N$ such $A^k$ is strictly stochastically ordered (stochastically ordered, respect.) for $k\geq k_0$.
\end{definition}

	\begin{remark}
		\label{rmk:kkt}
  Following \citet{KK77}, we let  $\bT$ be a $(N+1)\times(N+1)$ matrix such that $T_{ij}=1$ for $i\ge j$ and $0$ otherwise. Therefore $T^{-1}_{i,i-1}=-1$, for $i=1,\dots,N$, $T_{ii}^{-1}=1$ for $i=0,\dots,N$ and $0$ otherwise. A vector $\bv$ is increasing if, and only if, $\bT^{-1}\bv$ has positive entries, except possibly for the first. Also, a matrix $\bM$ is stochastically ordered if, and only if, $\bT^{-1}\bM\bT$ is positive except possibly for the Oth column and row
  \end{remark}

We begin by identifying the relevant algebraic structure of these classes of matrices with a slight extension of a result already present in \cite{KK77}:

\begin{lemma}\label{lem:stocordered}
 Let  $\mathcal{M}_{N}$ be one of the sets $\SO_N$, $\SSO_N$ or $\BSO_N$. Then if $\bA,\bB\in\mathcal{M}_{N}$, then $\bA\bB\in\mathcal{M}_{N}$ and $\mu\bA+(1-\mu)\bB\in\mathcal{M}_{N}$ for all $\mu\in[0,1]$. Furthermore,
  $\SSO_N\subset\BSO_N\subset\SO_N$.
\end{lemma}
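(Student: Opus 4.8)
The plan is to conjugate by the cumulative-sum matrix $\bT$ of Remark~\ref{rmk:kkt} and to work throughout with $\bL(\bA)\bydef\bT^{-1}\bA\bT$, which turns the tail-sum inequalities defining stochastic ordering into entrywise sign conditions and, crucially, is both linear and multiplicative in $\bA$. A direct computation gives $(\bA\bT)_{in}=\sum_{j\ge n}A_{ij}$, hence $\bL(\bA)_{in}=\sum_{j\ge n}A_{ij}-\sum_{j\ge n}A_{i-1,j}$ for $i\ge1$; thus $\bA_{i,\cdot}\succ\bA_{i-1,\cdot}$ is exactly $\bL(\bA)_{in}\ge0$ for $n\ge1$. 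Because $\succ$ is transitive on $\simplex^{N}$, ordering consecutive rows is equivalent to ordering all pairs, so $\bA\in\SO_N$ iff $\bA$ is stochastic and $\bL(\bA)_{in}\ge0$ for all $i,n\ge1$; likewise $\bA\in\SSO_N$ iff $\bL(\bA)_{in}>0$ for all $i,n\ge1$, and $\bA\in\BSO_N$ iff $\bL(\bA)_{in}\ge0$ for all $i,n\ge1$ together with $\bL(\bA)_{rn}>0$ for $n\in\{r-k_1,\dots,r+k_2\}\cap\{1,\dots,N\}$ and some $k_1,k_2\ge1$ (so, in particular, $\bL(\bA)_{rr}>0$). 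Finally, since every row of a stochastic matrix sums to one, $(\bA\bT)_{i0}=1$ for every $i$, whence the first column below the top entry vanishes: $\bL(\bA)_{i0}=0$ for $i\ge1$. This last fact is what renders the exceptional first row and column inert in all the arguments below.

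Convex combinations are then immediate from linearity: $\bL(\mu\bA+(1-\mu)\bB)=\mu\bL(\bA)+(1-\mu)\bL(\bB)$, and a convex combination of stochastic matrices is stochastic, so non-negativity (for $\SO_N$), strict positivity (for $\SSO_N$), and band positivity (for $\BSO_N$, retaining the band of $\bA$ when $\mu>0$ and that of $\bB$ when $\mu=0$) all pass to the combination entrywise. For products I would use multiplicativity, $\bL(\bA\bB)=\bL(\bA)\bL(\bB)$, noting that a product of stochastic matrices is stochastic. For $i,n\ge1$, $\bL(\bA\bB)_{in}=\sum_{m=0}^{N}\bL(\bA)_{im}\bL(\bB)_{mn}$, and the $m=0$ term drops out because $\bL(\bA)_{i0}=0$; the remaining sum runs over $m\ge1$, where the factors are non-negative for $\SO_N$ and strictly positive for $\SSO_N$. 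Hence $\SO_N$ is closed under products, and so is $\SSO_N$ (a sum of positive terms over the nonempty set $\{1,\dots,N\}$ being positive).

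The $\BSO_N$ case is the one I expect to require real care, since closure demands an honest positive band of width at least one around the diagonal of $\bA\bB$, not merely positivity at a single location. The device is the strict positivity of the diagonal of $\bL$, available precisely because $k_1,k_2\ge1$. Retaining only the $m=r$ term and discarding the other non-negative summands yields $\bL(\bA\bB)_{rn}\ge\bL(\bA)_{rr}\,\bL(\bB)_{rn}$; as $\bL(\bA)_{rr}>0$ and $\bL(\bB)_{rn}>0$ for every $n$ in the band of $\bB$ about $r$, the product is strictly positive on that entire band (of widths $\ge1$), while remaining non-negative everywhere by the $\SO_N$ computation. Thus $\bA\bB\in\BSO_N$. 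The inclusions then read off the characterisations at once: $\BSO_N\subset\SO_N$ holds by definition, and $\SSO_N\subset\BSO_N$ because $\bL(\bA)_{in}>0$ for all $i,n\ge1$ is nothing but a maximal positive band, e.g.\ with $k_1=k_2=N$.
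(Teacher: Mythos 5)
Your proposal is correct and follows essentially the same route as the paper: conjugate by the cumulative-sum matrix $\bT$, translate each ordering condition into an entrywise sign condition on $\bT^{-1}\bA\bT$, and conclude from the linearity and multiplicativity of conjugation. The only difference is that you spell out the details the paper leaves implicit, in particular the vanishing of the $0$th column for stochastic matrices and the diagonal-dominance argument that preserves the positive band in the $\BSO_N$ product case.
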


\begin{proof}
   Let the matrix $\bT$ be as defined in Remark~\ref{rmk:kkt}. It is clear that $\bT^{-1}\bA\bT$ is positive if and only if $\bA\in\SSO_N$, $\bT^{-1}\bA\bT$ is non-negative with the three principal diagonals positive if and only if $\bA\in\BSO_N$  and $\bT^{-1}\bA\bT$ is non-negative if and only if $\bA\in\SO_N$. The result follows noting that $\bT^{-1}\bX_1\bX_2\bT=\bT^{-1}\bX_1\bT\bT^{-1}\bX_2\bT$ and $\bT^{-1}(\mu\bX_1+(1-\mu)\bX_2)\bT=\mu\bT^{-1}\bX_1\bT+(1-\mu)\bT^{-1}\bX_2\bT$. 
\end{proof}

\begin{remark}
	\label{rmk:csg}
	A set  of matrices satisfying the conclusions of Lemma~\ref{lem:stocordered} is known as a convex semigroup, i.e., it is a convex set in a vector space, with an associative multiplication that is  compatible with the convex structure~\citep{Ressel1987,Berg}. Notice that it is  easy to give examples showing that $\K$ is not a convex semigroup, since the product of irreducible matrices is not necessarily irreducible---cf. \cite{BP79}. This implies that $\K$ is not a suitable class for time-inhomogeneous process, and an appropriate restriction will be discussed in Section~\ref{sec:inhomogeneous}.
\end{remark}

The terminology in the definitions above are not standard. Here we follow and generalise the definition at \citep[Section 3.5]{Nasell2011}; in~\citep{KK77} stochastically ordered matrices are called \emph{monotone} and if $\bu\succ\bv$ it is said that $\bu$  \emph{dominates} $\bv$. The use of the adjective \emph{strict} here is similar to~\citep{KK77}; the concept of banded stochastic matrices, however, seems to be new.

Notice also that Equation~\eqref{eq:limit} implies that  $\bF$ is regular (weakly-regular) if, and only if, $\bM^\infty\in\SSO_N$ ($\bM^\infty\in\SO_N$, respect.).

\subsection{Regularity and weak-regularity in the Kimura class}	

We begin by showing that Lemma~\ref{lem:stocordered} and the characterisation of weak-regularity in terms of $\bM^{\infty}$ already yields a sufficiency condition for a Kimura matrix to be weakly-regular:

\begin{proposition}
	\label{prop:wr}
	Let $\bM$ be a $(N+1)\times(N+1)$ Kimura matrix. If $\bM$ is eventually stochastically ordered then $\bM$ is weakly-regular.
\end{proposition}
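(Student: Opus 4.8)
The plan is to reduce the statement to the topological fact that $\SO_{N+1}$ is a closed set, combined with the characterisation of weak-regularity through the limiting matrix $\bM^\infty$ recorded just after Lemma~\ref{lem:stocordered}. Recall that for a Kimura matrix the limit $\bM^\infty=\lim_{k\to\infty}\bM^k$ exists and is given explicitly by Equation~\eqref{eq:limit}, and that $\bF$ is weakly-regular if and only if $\bM^\infty\in\SO_{N+1}$. Thus it suffices to show that the hypothesis forces $\bM^\infty\in\SO_{N+1}$.

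First I would observe that membership in $\SO_{N+1}$ is cut out by the finite system of non-strict linear inequalities $\sum_{l=n}^{N}A_{i,l}\ge\sum_{l=n}^{N}A_{j,l}$, taken over all pairs $i>j$ and all admissible $n$; equivalently, by Remark~\ref{rmk:kkt}, by the non-negativity (off the $0$th row and column) of $\bT^{-1}\bA\bT$. In either description $\SO_{N+1}$ is an intersection of closed half-spaces, hence a closed subset of $\R^{(N+1)\times(N+1)}$.

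Now, since $\bM$ is eventually stochastically ordered, there exists $k_0$ with $\bM^k\in\SO_{N+1}$ for every $k\ge k_0$. The tail $(\bM^k)_{k\ge k_0}$ is then a sequence lying in the closed set $\SO_{N+1}$ and converging to $\bM^\infty$, so $\bM^\infty\in\SO_{N+1}$. Applying the characterisation above, $\bF$ is non-decreasing, i.e. $\bM$ is weakly-regular.

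There is no genuine obstacle here beyond two bookkeeping points: that the limit $\bM^\infty$ actually exists (guaranteed by Equation~\eqref{eq:limit} for any Kimura matrix), and that the inequalities defining $\SO_{N+1}$ are non-strict, so that they survive passage to the limit. This last point is precisely why the argument yields only weak-regularity: the strict inequalities cutting out $\SSO_{N+1}$ define an open condition that need not be preserved under limits, so the corresponding strict regularity statement genuinely requires a more delicate argument rather than this closedness observation.
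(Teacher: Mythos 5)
Your proof is correct and follows essentially the same route as the paper's: both reduce the claim to showing $\bM^\infty\in\SO_{N+1}$ by exhibiting it as the limit of a sequence of stochastically ordered powers of $\bM$ and appealing to the characterisation of weak-regularity via Equation~\eqref{eq:limit}. The only (cosmetic) difference is that the paper generates the ordered powers via Lemma~\ref{lem:stocordered} applied to $\bN=\bM^{\kappa_0}$ and leaves the closedness of $\SO_{N+1}$ implicit, whereas you use the hypothesis directly on the tail $(\bM^k)_{k\ge k_0}$ and make the closedness (non-strict inequalities) explicit --- which is a worthwhile clarification, and correctly explains why the argument cannot give strict regularity.
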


\begin{proof}
 Let $\kappa_0\in\N$ be such that $\bN\bydef\bM^{\kappa_0}\in\SO_{N+1}$. Then Equation~\eqref{eq:limit} and Lemma~\ref{lem:stocordered} imply that  $\bM^\infty=\lim_{\kappa\to\infty}\bM^\kappa=\lim_{\kappa\to\infty}\bN^\kappa\in\SO_{N+1}$ , and hence that $\bF$ is non-decreasing.
\end{proof}

The following example shows that $\bM$ can be weakly-regular, without being eventually stochastically ordered. In particular, is not possible to characterise weakly-regular processes as those which become stochastically ordered in finite time.

\begin{example}
Let
\begin{equation*}
\bM=	\begin{pmatrix}
		1 & 0 & 0 & 0 \\
		\frac{1}{8} & \frac{1}{2} & \frac{1}{4} & \frac{1}{8} \\
		0 & \frac{1}{2} & \frac{1}{2} & 0 \\
		0 & 0 & 0 & 1 \\
	\end{pmatrix}.
\end{equation*}
It is easily verified that the corresponding fixation vector is $\bF=\begin{pmatrix}
0&\dfrac{1}{2}&\dfrac{1}{2}&1
\end{pmatrix}$, hence $\bM$ is weakly-regular. We check directly that 
\[
 \bM^\kappa=\begin{pmatrix}
             1 & 0 & 0 & 0\\
             \alpha_\kappa & \delta_\kappa & \gamma_\kappa & \alpha_\kappa\\
             \beta_\kappa & 2\gamma_\kappa & \delta_\kappa & \beta_\kappa\\
             0 & 0 & 0 & 1
            \end{pmatrix}
\]
for certain sequences $\alpha_\kappa,\beta_\kappa,\gamma_\kappa,\delta_\kappa$ defined recursively. In particular, 
\[
 \alpha_{\kappa+1}=\frac{1}{8}+\frac{2\alpha_\kappa+\beta_\kappa}{4}\ ,\quad \beta_{\kappa+1}=\frac{\alpha_\kappa+\beta_\kappa}{2}\ .
\]
It is easily verified by induction in $\kappa$ that $\alpha_\kappa,\beta_\kappa<\sfrac{1}{2}$. On the other hand,
\[
 \alpha_{\kappa+1}-\beta_{\kappa+1}=\frac{1-2\beta_\kappa}{8}>0\ ,
\]
and this proves that $\bM^\kappa\not\in\SO_{N+1}$ for $\kappa\in\N$.
\end{example}

On the other hand, it turns out that the regular processes are exactly those which become strictly stochastically ordered in finite time:

\begin{theorem}
	\label{thm:rc}
	Let $\bM$ be a $(N+1)\times(N+1)$ Kimura matrix. Then $\bM$ is regular if, and only if, it is eventually strictly stochastically ordered.
\end{theorem}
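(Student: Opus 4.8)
The plan is to transfer everything to the conjugated matrix $\bT^{-1}\bM\bT$ of Remark~\ref{rmk:kkt} and reduce the statement to the eventual positivity of a single $N\times N$ matrix. Let $\bL$ denote the principal submatrix of $\bT^{-1}\bM\bT$ obtained by deleting its $0$th row and column (the associated matrix of Subsection~\ref{ssec:regular_equiv}). The first step is to record the block structure: since $\bM$ is Kimura, its $0$th row is $\be_0$ and its $N$th row is $\be_N$, and a direct computation with $\bT$ shows that the $0$th row of $\bT^{-1}\bM\bT$ is $\be_0$ and its $0$th column vanishes below the diagonal, so that $\bT^{-1}\bM\bT=\mathrm{diag}(1,\bL)$ and hence $\bT^{-1}\bM^\kappa\bT=\mathrm{diag}(1,\bL^\kappa)$ for every $\kappa$. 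By the characterisation used in the proof of Lemma~\ref{lem:stocordered}, $\bM^\kappa\in\SSO_{N+1}$ if, and only if, $\bL^\kappa$ is entrywise positive; thus ``$\bM$ is eventually strictly stochastically ordered'' is exactly ``$\bL$ is eventually positive'', and it remains to match this with regularity.

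The second step is to pin down the dominant spectral data of $\bL$, which hold for every Kimura $\bM$. Passing to the limit in $\bT^{-1}\bM^\kappa\bT=\mathrm{diag}(1,\bL^\kappa)$ and using Equation~\eqref{eq:limit}, the limit $\bL^\infty\bydef\lim_\kappa\bL^\kappa$ exists and is the rank-one matrix $\bd\,\bone$, where $\bd=(F_1-F_0,\dots,F_N-F_{N-1})^\dagger$ is the vector of successive increments of $\bF$ and $\bone$ is the all-ones row vector in $\R^N$. Since $\bone\bd=F_N-F_0=1$, the limit $\bL^\infty$ is nonzero, and convergence of $\bL^\kappa$ to a nonzero limit forces $\rho(\bL)=1$ to be a simple eigenvalue strictly dominating all others in modulus, with right eigenvector $\bd$. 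A telescoping computation of the column sums of $\bL$ (again using that rows $0$ and $N$ of $\bM$ are $\be_0$ and $\be_N$) gives $\bone\bL=\bone$, so $\bone>0$ is the associated left eigenvector. Finally, by Proposition~\ref{prop:fixationvector} and Definition~\ref{def:regular}, $\bM$ is regular precisely when $\bF$ is strictly increasing, i.e. precisely when $\bd>0$.

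The two implications then follow. For regular $\Rightarrow$ eventually SSO: if $\bd>0$ then $\bL^\infty=\bd\,\bone>0$, and since $\bL^\kappa\to\bL^\infty$ and the cone of positive matrices is open, $\bL^\kappa>0$ for all large $\kappa$, i.e. $\bM$ is eventually SSO. For the converse, suppose $\bL^{\kappa_0}>0$ for some $\kappa_0$. Classical Perron--Frobenius applied to the positive matrix $\bL^{\kappa_0}$ shows that its spectral radius, which equals $\rho(\bL)^{\kappa_0}=1$ by the previous step, is a simple eigenvalue with a strictly positive eigenvector. As $\bd$ is an eigenvector of $\bL^{\kappa_0}$ for the eigenvalue $1$, it must be a scalar multiple of that positive eigenvector, and the normalisation $\bone\bd=1>0$ forces $\bd>0$; hence $\bF$ is strictly increasing and $\bM$ is regular.

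The main obstacle is this converse direction: passing to the limit only yields $\bL^\infty\ge0$, i.e. $\bd\ge0$, and one must rule out a vanishing increment $F_i=F_{i-1}$, since a sequence of positive matrices can perfectly well converge to a matrix with a zero row. This is exactly where the Perron--Frobenius structure of $\bL$ — a simple dominant eigenvalue whose eigenline is spanned by $\bd$ — is indispensable, as it ties the strictly positive Perron eigenvector of $\bL^{\kappa_0}$ to the direction of $\bd$. Equivalently, one may invoke the characterisation of eventually positive matrices as those for which both $\bL$ and $\bL^\dagger$ enjoy the strong Perron--Frobenius property; here $\bL^\dagger$ always does, with positive Perron vector $\bone$, so eventual positivity reduces to $\bd>0$. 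The block reduction of the first step and the column-sum identity are then routine verifications using the explicit form of Kimura matrices.
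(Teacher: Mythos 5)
Your proof is correct and follows essentially the same route as the paper: conjugation by $\bT$, reduction to the associated matrix $\bL$, and a Perron--Frobenius argument identifying the increment vector $\bd$ (the paper's $\widetilde{\bG}$) with the Perron direction of $\bL$, with the forward implication obtained from openness of positivity and the convergence $\bL^\kappa\to\bd\,\bone$. The only difference is cosmetic: where the paper routes the converse through Theorem~\ref{thm:reg_iff_pfn} and the cited equivalence between membership in $\mathsf{PF}_N$ and eventual positivity, you apply the classical Perron--Frobenius theorem directly to the positive power $\bL^{\kappa_0}$ and use the normalisation $\bone\bd=1$ to fix the sign --- a slightly more self-contained version of the same mechanism.
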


	We will prove here only necessity. The sufficiency proof is somewhat more involved, and we defer it to Section~\ref{ssec:regular_equiv} where a complete proof using a different argument is given. 
	
\begin{proof}[only-if part]
	In view of Remark~\ref{rmk:kkt} and  the characterisation of regularity as  $\bM^\infty\in \SSO_{N+1}$, we observe that $\bT^{-1}\bM^\infty\bT$ has all its  entries positive---except for the 0th column and row. Hence, the same is also true for $\bT^{-1}\bM^k\bT$, when $k$ is  sufficiently large. Thus, there exists $k_0$ such that $\bM^k\in\SSO_{N+1}$, if $k\geq k_0$.
\end{proof}

\subsection{Proof of Theorem~\ref{thm:rc}}

\label{ssec:regular_equiv}

We begin by examining in more detail the structure of $\bT^{-1}\bM\bT$, when $\bM\in\K$: If either $i=0$ or $j=0$, we have $\left(\mathbf{T}^{-1}\mathbf{MT}\right)_{ij}=\delta_{ij}$, and for $i,j>0$, we have that
\[       
\left(\mathbf{T}^{-1}\mathbf{MT}\right)_{ij}=\sum_{k,l}T_{ik}^{-1}M_{kl}T_{lj}=\sum_lM_{il}T_{lj}-\sum_lM_{i-1,l}T_{lj}=\sum_{l=j}^NM_{il}-\sum_{l=j}^NM_{i-1,l}.
\]
Therefore, we have that 
\[
\bT^{-1}\bM\bT=\begin{pmatrix}
1&\bzero\\\bzero^{\dagger}&\bL
\end{pmatrix},
\]
In particular, $\bM$ is essentially stochastically ordered if, and only if, $\bL^{k_0}$ is a positive matrix, for some $k_0\in\N$---such a matrix $\bL$ is said to be eventually positive. In what follows, we will say that $\bL$ is the associated matrix of $\bM$.

It turns out that we can characterise when a matrix is  regular using a class of matrices that satisfy the conclusions of the Perron-Frobenius theorem. Following \citet{Johnsoh_Tarazaga_2004}, we denote by $\mathsf{PF}_N$ the set of $N\times N$ matrices $\bA$ that have the Perron-Frobenius property, i.e., 
\begin{enumerate}
	\item $\rho(\bA)$ is a simple eigenvalue;
	\item there exist positive right and left eigenvectors of $\bA$ associated to $\rho(\bA)$.
\end{enumerate}

The next result makes this characterisation precise:	

\begin{theorem}
	\label{thm:reg_iff_pfn}
	Let $\bM$ be a Kimura transition matrix, and let
	\[
	\bT^{-1}\bM\bT=\begin{pmatrix}
	1&\bzero\\\bzero^{\dagger}&\bL
	\end{pmatrix}.
	\]
	Then $\bM$ is regular if, and only if, $\bL\in\mathsf{PF}_N$ 
\end{theorem}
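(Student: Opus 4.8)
The plan is to collapse both conditions in the claimed equivalence to a single sign condition on the successive differences of the fixation vector, by first computing the limit of the powers of the associated matrix. Since $\bT$ is a fixed invertible matrix and $\bM^\kappa\to\bM^\infty$ as $\kappa\to\infty$ by Eq.~\eqref{eq:limit}, conjugation shows that $\bL_\infty\bydef\lim_{k\to\infty}\bL^k$ exists and equals the lower-right $N\times N$ block of $\bT^{-1}\bM^\infty\bT$. Applying the entrywise computation of $\bT^{-1}\bM\bT$ from this subsection verbatim to $\bM^\infty$, and using $\bM^\infty=(\bone-\bF)\otimes\be_0+\bF\otimes\be_N$, one obtains $(\bL_\infty)_{ij}=F_i-F_{i-1}\bydef d_i$ for all $i,j\in\{1,\dots,N\}$; that is, $\bL_\infty$ is the rank-one matrix $\mathbf{d}^\dagger\bone$ with constant rows, where $\mathbf{d}=(d_1,\dots,d_N)$ and $F_0=0$, $F_N=1$. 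Telescoping gives $\sum_i d_i=F_N-F_0=1$, so $\mathbf{d}\neq\bzero$ and $\bL_\infty\neq\bzero$.

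The key step is then to extract the spectral data of $\bL$ at its dominant eigenvalue from this limit. Because $\lim_k\bL^k=\bL_\infty$ exists and is nonzero (its trace equals $\sum_i d_i=1$), the classical theory of convergence of matrix powers forces $\rho(\bL)=1$, with $1$ the only eigenvalue of unit modulus, semisimple, and with $\bL_\infty$ equal to the spectral projector onto $\ker(\bL-\bI)$. As $\bL_\infty$ has rank one, the eigenvalue $1=\rho(\bL)$ is simple; its right eigenspace is the range of $\bL_\infty$, namely $\mathrm{span}(\mathbf{d}^\dagger)$, and its left eigenspace is $\mathrm{span}(\bone)$. These eigenvectors can be exhibited directly and hold for every Kimura $\bM$: conjugating $\bM\bF^\dagger=\bF^\dagger$ from Proposition~\ref{prop:fixationvector} by $\bT$ yields $\bL\mathbf{d}^\dagger=\mathbf{d}^\dagger$, while a telescoping of the column sums of $\bT^{-1}\bM\bT$ gives $\bone\bL=\bone$.

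The equivalence is then immediate from the definition of $\mathsf{PF}_N$. Simplicity of $\rho(\bL)$ holds automatically, and the left eigenvector $\bone$ is positive unconditionally, so $\bL\in\mathsf{PF}_N$ reduces to the one-dimensional space $\mathrm{span}(\mathbf{d}^\dagger)$ containing a positive vector. Since $\sum_i d_i=1>0$, this occurs if and only if $\mathbf{d}>\bzero$, that is $F_i>F_{i-1}$ for every $i$, which by Definition~\ref{def:regular} is precisely the regularity of $\bM$. I expect the only genuinely delicate point to be the spectral extraction: because $\bL$ generally has negative entries, the Perron--Frobenius theorem for nonnegative matrices is unavailable, so $\rho(\bL)=1$, semisimplicity, and simplicity must be obtained from the existence and rank of $\lim_k\bL^k$ rather than from positivity of $\bL$ itself; the remaining steps (the closed form of $\bL_\infty$ and the telescoping identities) are routine.
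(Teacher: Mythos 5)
Your proof is correct, and its overall skeleton coincides with the paper's: both arguments reduce the claim to the facts that $1=\rho(\bL)$ is a simple eigenvalue, that $\bone$ is a positive left eigenvector, and that the right eigenspace is spanned by the difference vector $\mathbf{d}$ with $d_i=F_i-F_{i-1}$ (the paper's $\widetilde{\bG}$, obtained from $\bT^{-1}\bF^{\dagger}$), so that $\bL\in\mathsf{PF}_N$ holds exactly when $\mathbf{d}$ is positive, i.e.\ when $\bF$ is increasing. Where you genuinely differ is in how the spectral data at the eigenvalue $1$ is justified. The paper asserts simplicity ``from the structure of $\bT^{-1}\bM\bT$'' --- implicitly, the similarity shows that the eigenvalues of $\bL$ are $1$ together with those of $\widetilde{\bM}$, and $\rho(\widetilde{\bM})<1$ --- and then exhibits $\widetilde{\bG}$ and $\bone$ as eigenvectors directly. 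You instead compute $\bL_\infty=\lim_{k\to\infty}\bL^k=\mathbf{d}^{\dagger}\bone$ in closed form (a formula the paper records only afterwards, in Remark~\ref{rmk:row_column_L}) and extract everything from the theory of convergent matrix powers: $\rho(\bL)=1$ and semisimplicity from the existence of a nonzero limit, simplicity from $\mathrm{rank}\,\bL_\infty=1$, and the two eigenspaces from the range and row space of the limit projector. Your route is slightly longer but more self-contained, since it supplies an explicit justification for the step the paper leaves as an assertion; the devices used to exclude the alternative $-\mathbf{d}>\bzero$ also differ harmlessly ($\sum_i d_i=1$ for you, $G_1>0$ in the paper). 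Both are valid, and each correctly avoids appealing to nonnegativity of $\bL$, which may indeed fail.
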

\begin{proof}
	Let $\bF$ be  the fixation vector associated to $\bM$, and let
	\[
	\bT^{-1}\bF^{\dagger}= (0\; \widetilde{\bG})^{\dagger}=(0\; G_1\; G_2\;\cdots\; G_N)^{\dagger},\quad G_i=F_i-F_{i-1}.
	\]
	Then $\bM$ is regular if, and only if, $\widetilde{\bG}$ is positive. 
	
	Moreover, from the structure of $\bT^{-1}\bM\bT$, we immediately have that $\rho(\bL)=1$, and that  it is a simple eigenvalue of $\bL$. Indeed, $\be_0$ is a right and left eigenvector for $\bT^{-1}\bM\bT$, and $(0\; \bone^\dagger)$ is a left eigenvector of $\bT^{-1}\bM\bT$, and thus $\bone$ is a left eigenvector of $\bL$. 
	
	On the other hand, $\widetilde{\bG}$ is a real right eigenvector of $\bL$. Since we always have $G_1>0$, we have that  $\bL\in \mathsf{PF}_N$ if, and only if, $\widetilde{\bG}$ is positive.
\end{proof}

\begin{corollary}
If $\bM\in\BSO_{N+1}$, then $\bM$ is regular.
\end{corollary}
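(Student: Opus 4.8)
The plan is to reduce the statement to Theorem~\ref{thm:reg_iff_pfn}. Since regularity is only defined for matrices in $\K$ (Definition~\ref{def:regular}), the corollary is implicitly about a matrix $\bM\in\K\cap\BSO_{N+1}$, so that the factorisation $\bT^{-1}\bM\bT=\begin{pmatrix}1&\bzero\\\bzero^{\dagger}&\bL\end{pmatrix}$ from Subsection~\ref{ssec:regular_equiv} is available. It then suffices to show that the associated matrix $\bL$ belongs to $\mathsf{PF}_N$, and the entire argument becomes a matter of reading off the structure of $\bL$ from the $\BSO$ condition and invoking Perron--Frobenius.

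First I would extract the structure of $\bL$. By the characterisation established in the proof of Lemma~\ref{lem:stocordered}, $\bM\in\BSO_{N+1}$ is equivalent to $\bT^{-1}\bM\bT$ being non-negative with its three principal diagonals strictly positive. Translating the strict-inequality band of Definition~\ref{def:som} (with $k_1,k_2\ge 1$) into the block $\bL$ shows that $\bL$ is non-negative, that $L_{r,r}>0$ for every $r$, and that $L_{r,r\pm1}>0$ whenever $r\pm1\in\{1,\dots,N\}$. In other words, $\bL$ is a non-negative matrix whose tridiagonal part is strictly positive.

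The key step is then to observe that such a matrix is primitive. Irreducibility holds because the positive sub- and super-diagonal entries allow one to move between any two indices by unit steps, connecting all states; aperiodicity holds because the positive main diagonal provides self-loops and hence forces period one. By the Perron--Frobenius theorem for irreducible non-negative matrices, $\rho(\bL)$ is a simple eigenvalue admitting strictly positive left and right eigenvectors, so $\bL\in\mathsf{PF}_N$, and Theorem~\ref{thm:reg_iff_pfn} gives that $\bM$ is regular. (Equivalently, primitivity yields $\bL^k>0$ for all large $k$, whence $\bM^k\in\SSO_{N+1}$ eventually, which is regularity by Theorem~\ref{thm:rc}.)

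I expect the only delicate point to be the bookkeeping that identifies the $\BSO$ band with positivity of the tridiagonal entries of $\bL$, together with the boundary rows $r=1$ and $r=N$, where the intersection with $\{1,\dots,N\}$ truncates the band; one must check that the surviving entries still include the full \emph{available} tridiagonal part, so that irreducibility and aperiodicity are not lost at the endpoints. Everything past that reduction is a direct appeal to Perron--Frobenius, so this is a short corollary rather than a substantial argument.
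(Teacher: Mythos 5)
Your proposal is correct and follows the paper's own argument essentially verbatim: both read off from the $\BSO$ condition that the associated matrix $\bL$ is non-negative with positive diagonal, superdiagonal and subdiagonal, conclude that $\bL$ is primitive and hence lies in $\mathsf{PF}_N$, and then invoke Theorem~\ref{thm:reg_iff_pfn}. The extra care you take over the truncation of the band at the boundary rows is sound bookkeeping that the paper leaves implicit, but it does not change the route.
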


\begin{proof}
The corresponding $\bL$ is non-negative, and at least the diagonal, main superdiagonal, and lower superdiagonal are positive, and hence $\bL$ is
primitive; in particular, $\bL\in\mathsf{PF}_N$.
\end{proof}

As observed in \citet{Noutsos_2006,Tarazaga_Raydan_Hurman_2001}, there are matrices with negative entries in $\mathsf{PF}_N$. However, it can be shown that $\bA\in \mathsf{PF}_N$ if, and only if,  $\bA$ is eventually positive---cf. \citep{Johnsoh_Tarazaga_2004}. With this final observation, we are now ready to provide the proof to Theorem~\ref{thm:rc}:

\begin{proof}[Theorem~\ref{thm:rc}]
	In view of the remark above, Theorem~\ref{thm:reg_iff_pfn} already shows that $\bM$ is regular if, and only if, the associated matrix $\bL$ is eventually positive. As already pointed out above, this happens if, and only if, $\bM$ is eventually strictly stochastically ordered. 
\end{proof}

\begin{remark}\label{rmk:row_column_L}
	Notice that for $j=1,\ldots,N$, we always have 
	\[
	\sum_{i=1}^N\left(\bT^{-1}\bM\bT\right)_{ij}=\sum_{k=j}^N\sum_{i=1}^N\left(M_{ik}-M_{i-1,k}\right)=\sum_{k=j}^N\left(M_{Nk}-M_{0k}\right)=1\ .
	\]
	In particular, this means that $\begin{pmatrix}
	1 & \mathbf{1}
	\end{pmatrix}^\dagger$
	is a left eigenvector of $\bT^{-1}\bM\bT$ associated to the eigenvalue one, and hence that $\mathbf{1}$ is a left eigenvalue of $\bL$ associated to the same eigenvalue. 
	Notice also that the row-wise sum of $\bL$ yields information on the difference of the marginal increase in the  expected frequency of the first type after one step:
	\begin{align*}
	\sum_{j=1}^N\left(\bT^{-1}\bM\bT\right)_{ij}&=\sum_{j=1}^N\sum_{k,l=0}^NT_{ik}^{-1}M_{kl}T_{lj}=\sum_{k,l=0}^NlT^{-1}_{ik}M_{kl}\\
	&=\sum_{k=0}^N\mathbb{E}[X_{\ell+1}|X_{\ell}=k]T^{-1}_{ik}=\mathbb{E}[X_{\ell+1}|X_{\ell}=i]-\mathbb{E}[X_{\ell+1}|X_{\ell}=i-1], i>0\ .
	\end{align*}
	In particular, this implies that $\lim_{k\to\infty}\bL^k$ is a matrix that is constant by rows with each row $i$ being $F_i-F_{i-1}$.
\end{remark}

\section{A study of fixation and regularity in a selected list of processes}
\label{sec:fix_reg_classical}

\subsection{Moran process is regularly universal}\label{ssec:Moran_regular}

		The Moran process is a special case of a more general class known as birth-death (BD) processes. A general BD process is characterised by a transition matrix that satisfy $M_{ij}=0$, if $|i-j|>1$. Let $X_k$ denote the corresponding population process. Then it is easily checked that 
			\[
			\E[X_{k+1}|X_{k}=i]= i + M_{i,i+1}-M_{i,i-1}.
			\]

If $\bM$ is a Kimura tri-diagonal matrix, then  the fixation vector $\bF$ is given by \citep{Karlin_Taylor_first,GS:1997}:
		\begin{equation}
		\label{eq:tri_fix}
		F_i=c^{-1}\sum_{l=1}^i\prod_{k=1}^l\frac{M_{k-1,k}}{M_{k+1,k}},\quad c=\sum_{l=1}^N\prod_{k=1}^{l-1}\frac{M_{k-1,k}}{M_{k+1,k}}.
	\end{equation}
		In particular, every tri-diagonal process is regular. We now turn to a more detailed study of Moran processes:
		
		\begin{theorem}
			\label{thm:Moran_fix}
			Let $\bF$ be an admissible fixation vector. Then $\bF$ is the fixation vector of some Moran process if, and only if, $\bF$ is increasing. Moreover, in the latter case, the type fixation probabilities of the Moran process that realises such a vector are given by
				\begin{equation}\label{eq:pj_fixation}
				p_i=\frac{i(F_i-F_{i-1})}{i(F_i-F_{i-1})+(N-i)(F_{i+1}-F_i)}\in(0,1), \quad i=1,\dots, N-1.
				\end{equation}
				
		\end{theorem}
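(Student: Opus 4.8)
The plan is to reduce everything to the interior rows of the eigenvalue equation $\bM\bF^{\dagger}=\bF^{\dagger}$ furnished by Proposition~\ref{prop:fixationvector}. Writing $G_i\bydef F_i-F_{i-1}$ and recalling that each interior row of a Moran matrix carries only the three entries $M_{i,i-1}=\frac{i}{N}(1-p_i)$, $M_{ii}$, $M_{i,i+1}=\frac{N-i}{N}p_i$, which sum to $1$, the relation $M_{i,i-1}F_{i-1}+M_{ii}F_i+M_{i,i+1}F_{i+1}=F_i$ collapses, after subtracting $F_i$, to the single identity
\[
\frac{N-i}{N}\,p_i\,G_{i+1}=\frac{i}{N}\,(1-p_i)\,G_i,\qquad i=1,\dots,N-1.
\]
This one relation is the hinge of both implications, so I would establish it first and then read it in two directions.

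For necessity, suppose $\bF$ is the fixation vector of a Moran process. Since $0<p_i<1$ makes all three coefficients strictly positive, the hinge relation expresses each increment as a positive multiple of its predecessor, $G_{i+1}=\frac{i(1-p_i)}{(N-i)p_i}\,G_i$; hence $G_1,\dots,G_N$ all carry the same sign and vanish simultaneously. As $\sum_{i=1}^N G_i=F_N-F_0=1$ they cannot all vanish, so every increment is strictly positive and $\bF$ is increasing. (Alternatively one may quote that every tri-diagonal process is regular via Equation~\eqref{eq:tri_fix}, each $F_i$ being a normalised partial sum of strictly positive products.)

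For sufficiency, given an increasing admissible $\bF$ I would simply solve the hinge relation for $p_i$, which yields exactly
\[
p_i=\frac{iG_i}{iG_i+(N-i)G_{i+1}}=\frac{i(F_i-F_{i-1})}{i(F_i-F_{i-1})+(N-i)(F_{i+1}-F_i)}.
\]
Admissibility together with monotonicity gives $G_i,G_{i+1}>0$ for all $i=1,\dots,N-1$ (including the boundary increments $G_1=F_1>0$ and $G_N=1-F_{N-1}>0$), so the denominator is a sum of positive terms strictly exceeding the numerator and $p_i\in(0,1)$. These $p_i$ define a Moran matrix $\bM$ whose core is tri-diagonal with strictly positive sub- and super-diagonals, hence irreducible and sub-stochastic, with nonzero extinction/fixation vectors $\ba,\bb$; thus $\bM\in\K$. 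By construction the $p_i$ satisfy the hinge relation, which is precisely $\bM\bF^{\dagger}=\bF^{\dagger}$ on interior rows, while the boundary rows hold automatically because states $0$ and $N$ are absorbing and $F_0=0$, $F_N=1$. The uniqueness clause of Proposition~\ref{prop:fixationvector} then identifies $\bF$ as the fixation vector of $\bM$.

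No step here is a genuine obstacle; the content is elementary once the hinge relation is isolated, and the only points demanding care are bookkeeping rather than ideas: verifying the two boundary increments $G_1,G_N$ so that $p_1,\dots,p_{N-1}$ are well defined and strictly between $0$ and $1$, and confirming that the constructed $\bM$ genuinely lands in $\K$ (irreducibility of the core and nonvanishing of $\ba,\bb$) so that Proposition~\ref{prop:fixationvector} applies and delivers the uniqueness that pins down $\bF$.
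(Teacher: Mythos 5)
Your proof is correct and follows essentially the same route as the paper: both hinge on the interior-row recursion $(N-i)p_i(F_{i+1}-F_i)=i(1-p_i)(F_i-F_{i-1})$, which the paper states as Equation~\eqref{eqn:rec_fgj} and solves for $p_i$ to obtain \eqref{eq:pj_fixation}. The only cosmetic difference is in the necessity direction, where you propagate the sign of the increments directly through the recursion (and note the alternative of citing Equation~\eqref{eq:tri_fix}), whereas the paper simply invokes \eqref{eq:tri_fix}; your added care about the boundary increments and membership of the constructed matrix in $\K$ is sound.
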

		\begin{proof}
			Assume $\bF$ is the fixation vector of a Moran process. Then it is increasing as a special case of Equation~\eqref{eq:tri_fix}.
			
			Conversely, assume that $\bF$ is increasing, and recall that $\bF$ is a fixation vector if, and only if, it satisfies the recursion
						\[
						F_{i-1}\frac{i}{N}(1-p_i)+F_i\left(\frac{i}{N}p_i+\frac{N-i}{N}(1-p_i)\right)+F_{i+1}\frac{N-i}{N}p_i=F_i.
						\]
						This can be rewritten as 
						\begin{equation}
						\label{eqn:rec_fgj}
						(N-i)p_iG_i-i(1-p_i)G_{i-1}=0,\quad i=1,\ldots,N-1 ,
						\end{equation}
						where $G_i$ is the marginal gain in fixation:
						\begin{equation}
						\label{eqn:gj_def}
						G_i\bydef F_{i+1}-F_i.
						\end{equation}
						  Equation~\eqref{eqn:rec_fgj} is usually solved for $G_i$, which leads to a special case of Equation~\eqref{eq:tri_fix}. However, it can be also uniquely solved for $p_i$  yielding equation~(\ref{eq:pj_fixation}), which satisfies  $0<p_i<1$.
		\end{proof}

\begin{remark}
Equation~\eqref{eq:pj_fixation}  indicates that the marginal increasing in the fixation probability---i.e. the increase in fixation probability, when the frequency of type \X increase by one---can be understood as a reproductive fitness, cf. Equation~\eqref{eq:p_and_fit_rep}. More precisely, we let
\begin{align*}
& \varphi^{(\A)}(i)\bydef F_i-F_{i-1}>0\ ,\\
& \varphi^ {(\B)}(i)\bydef (1-F_i)-(1-F_{i+1})=F_{i+1}-F_i>0 .
\end{align*}
\end{remark}

	Although a BD process do not need to be stochastically ordered to be regular, the class of stochastically ordered matrices will be of interest when discussing time-inhomogeneous processes, and thus we will digress about this point. First, we observe that a BD process is banded stochastically ordered if, and only if, we have
			\[
			M_{i,i+1}+M_{i+1,i}<1,\quad i=1,\ldots,N-2.
			\]
In particular, the Moran process is banded stochastically ordered if, and only if,
		\[
		\left(1-\frac{i}{N}\right)p_i+\frac{i+1}{N}(1-p_{i+1})<1,\quad i=1,\ldots,N-2.
		\]	
		This immediately yields the following result
		\begin{lemma}\label{lem:MoranBSO}
			Let $\bM$ be the Moran matrix associated to type selection probability $\bp$.
			\begin{enumerate}
				\item If $\bp$ is increasing, we have that $\bM$ is banded stochastically ordered;
				\item If $\bp$ is such that $\sfrac{1}{(N+1)}<p_i<1-\sfrac{1}{(N+1)}$, then $\bM$ is banded stochastically ordered. 
			\end{enumerate}
		\end{lemma}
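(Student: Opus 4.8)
The plan is to reduce both claims to the single scalar inequality recorded immediately before the lemma, namely that a Moran matrix $\bM$ with type selection probabilities $\bp$ is banded stochastically ordered if, and only if,
\[
\left(1-\frac{i}{N}\right)p_i+\frac{i+1}{N}(1-p_{i+1})<1,\qquad i=1,\dots,N-2.
\]
Since this characterisation is already in hand, each part amounts to checking this inequality under the stated hypothesis. Throughout I would use that for $i=1,\dots,N-2$ both $i$ and $i+1$ are interior indices, so that $0<p_i<1$ and $0<p_{i+1}<1$.

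For claim (2) I would bound the two summands separately. The hypothesis $\sfrac{1}{(N+1)}<p_i<1-\sfrac{1}{(N+1)}$ gives $p_i<\sfrac{N}{(N+1)}$ and, read at the index $i+1$, also $1-p_{i+1}<1-\sfrac{1}{(N+1)}=\sfrac{N}{(N+1)}$. Substituting these into the left-hand side and factoring out the common bound yields
\[
\left(1-\frac{i}{N}\right)p_i+\frac{i+1}{N}(1-p_{i+1})<\frac{N}{N+1}\left(\frac{N-i}{N}+\frac{i+1}{N}\right)=\frac{N}{N+1}\cdot\frac{N+1}{N}=1,
\]
which is exactly the required strict inequality. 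Both coefficients $\sfrac{(N-i)}{N}$ and $\sfrac{(i+1)}{N}$ are positive for $i\le N-2$, so the strict bounds propagate and this part is essentially immediate.

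For claim (1) I would clear the denominator and rewrite the target as $(N-i)p_i-(i+1)p_{i+1}<N-i-1$. Using that $\bp$ is increasing, so $p_{i+1}>p_i$, together with $i+1>0$, the left-hand side is strictly dominated by $(N-i)p_i-(i+1)p_i=(N-2i-1)p_i$, and it then suffices to show $(N-2i-1)p_i\le N-i-1$. This is the one step that requires a moment's care, via a sign split on $N-2i-1$: if $N-2i-1\le 0$ the left side is $\le 0<N-i-1$, using $N-i-1\ge 1$ for $i\le N-2$; while if $N-2i-1>0$ then $p_i<1$ gives $(N-2i-1)p_i<N-2i-1\le N-i-1$, the last inequality holding because $i\ge 0$. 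Combining these with the strict domination above yields the strict inequality, and hence the banded stochastic ordering. I do not expect any genuine obstacle here: the whole argument is a specialisation of the pre-established BSO criterion for birth--death processes to the Moran transition probabilities, and the only mildly delicate point is the case distinction on the sign of $N-2i-1$ in claim (1); everything else reduces to one-line bounds.
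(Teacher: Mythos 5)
Your proposal is correct and follows exactly the route the paper intends: the paper derives the criterion $\left(1-\frac{i}{N}\right)p_i+\frac{i+1}{N}(1-p_{i+1})<1$ for $i=1,\dots,N-2$ and then states that the lemma follows immediately, leaving the verification to the reader. Your two verifications — the common bound $\sfrac{N}{(N+1)}$ for part (2) and the case split on the sign of $N-2i-1$ for part (1) — are exactly the omitted details, and both are carried out correctly.
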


\begin{remark}\label{rmk:Metzler}
	We can use Theorem~\ref{thm:reg_iff_pfn} to prove that BD processes are regular even if they are not stochastically ordered, and  without using the explicit expression for the fixation probability.  Indeed, for such processes we have that $\bL$ is a tridiagonal matrix with the non-zero entries given by
	\begin{align*}
	L_{j,j+1}&=M_{j,j+1}\\
	L_{j+1,j}&=M_{j+1,j}\\
	L_{j,j}&=1-M_{j,j+1}-M_{j+1,j}
	\end{align*}
	with $j=0,\ldots,N-2$ in the first two equations, and $j=0,\ldots,N-1$ in the last equation.
	
	Thus $\bL$ is an irreducible matrix with non-negative off-diagonal elements, and hence is an irreducible Metzler matrix. Such a matrix has the Perron-Frobenius property \citep{Arrow,BP79}, and hence BD processes are regular.
\end{remark}

\subsection{Regularity and irregularity in the Wright Fisher process}

The discussion of the Wright-Fisher process requires more work. It turns out that a very useful tool will be the Bernstein polynomial associated to the fixation vector $\bF$, namely:

\begin{equation}
\Upsilon_{\bF}(p)\bydef\sum_{i=0}^NF_i\binom{N}{i}p^i(1-p)^{N-i}.
\label{eqn:bpf}
\end{equation}
It is easy to check that $\gamma_\bF(0)=0$, and $\gamma_\bF(1)=1$. Furthermore, if $\bF$ is increasing, then  $\Upsilon_{\bF}(p)$ is an increasing function in $[0,1]$---cf. \citep{Phillips:2003} or \citep{Gzyl2003}.

We are now in a position to characterise the regular WF processes:

\begin{theorem}\label{thm:WFregular_sse}
Let $\bM$ be the transition matrix  of the Wright Fisher process associated to the type selection probability vector $\bp$. The three conditions below are equivalent.
\begin{enumerate}
 \item\label{WFregular_1} The process $\bM$ is regular.
 \item\label{WFregular_2} The matrix $\bM$ is strictly stochastically ordered.
 \item\label{WFregular_3} The vector $\bp$ is increasing.
\end{enumerate}
\end{theorem}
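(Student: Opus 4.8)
The plan is to prove the three statements equivalent by closing the cycle $(\ref{WFregular_3})\Rightarrow(\ref{WFregular_2})\Rightarrow(\ref{WFregular_1})\Rightarrow(\ref{WFregular_3})$, which dispatches the cheap implications first and isolates the one substantive step. Everything rests on a single structural feature of the Wright--Fisher matrix: its $i$-th row is the binomial law $\mathrm{Bin}(N,p_i)$, so stochastic domination between rows is governed by the scalar $p_i$, and the eigenvector equation for $\bF$ collapses into the evaluation of a Bernstein polynomial.

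For $(\ref{WFregular_3})\Rightarrow(\ref{WFregular_2})$ I would use that, for fixed $N$, the binomial tail $S_n(p)\bydef\sum_{k=n}^N\binom{N}{k}p^k(1-p)^{N-k}$ is strictly increasing in $p\in[0,1]$ for every $n=1,\dots,N$; this is the elementary computation $S_n'(p)=n\binom{N}{n}p^{n-1}(1-p)^{N-n}>0$ on $(0,1)$ (equivalently, the incomplete-beta representation of the tail). Hence if $\bp$ is increasing, then $p_i>p_j$ for $i>j$ gives $S_n(p_i)>S_n(p_j)$ for all $n$, i.e.\ row $i$ strictly stochastically dominates row $j$; the absorbing rows $i=0,N$ cause no trouble since $0<p_\ell<1$ for every interior $\ell$, so $\bM\in\SSO_{N+1}$. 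The step $(\ref{WFregular_2})\Rightarrow(\ref{WFregular_1})$ is then immediate: by Lemma~\ref{lem:stocordered} the class $\SSO_{N+1}$ is closed under products, so $\bM\in\SSO_{N+1}$ forces $\bM^k\in\SSO_{N+1}$ for every $k$, in particular $\bM$ is eventually strictly stochastically ordered, and Theorem~\ref{thm:rc} yields regularity.

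The remaining implication $(\ref{WFregular_1})\Rightarrow(\ref{WFregular_3})$ is the heart of the matter. Writing out $\bM\bF^\dagger=\bF^\dagger$ from Proposition~\ref{prop:fixationvector} and using that the $i$-th row is binomial gives $F_i=\sum_{j=0}^N\binom{N}{j}p_i^j(1-p_i)^{N-j}F_j=\Upsilon_{\bF}(p_i)$, so the fixation value at state $i$ is exactly the Bernstein polynomial \eqref{eqn:bpf} evaluated at $p_i$. If $\bM$ is regular then $\bF$ is increasing, and differentiating \eqref{eqn:bpf} gives $\Upsilon_{\bF}'(p)=N\sum_{i=0}^{N-1}(F_{i+1}-F_i)\binom{N-1}{i}p^i(1-p)^{N-1-i}$, which is strictly positive throughout $[0,1]$ because every coefficient $F_{i+1}-F_i$ is positive. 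Thus $\Upsilon_{\bF}$ is a strictly increasing bijection of $[0,1]$, and from $\Upsilon_{\bF}(p_i)=F_i>F_{i-1}=\Upsilon_{\bF}(p_{i-1})$ I can invert to conclude $p_i>p_{i-1}$ for all $i$, i.e.\ $\bp$ is increasing.

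I expect the main obstacle to be precisely this last implication. The crucial observation is that $F_i=\Upsilon_{\bF}(p_i)$ converts the abstract monotonicity of the fixation vector into a one-dimensional monotone-inversion problem, and that what legitimises the inversion is the \emph{strict} monotonicity of $\Upsilon_{\bF}$---which in turn requires the strict positivity (not mere non-negativity) of the successive differences $F_{i+1}-F_i$ supplied by regularity. The binomial-tail monotonicity underlying $(\ref{WFregular_3})\Rightarrow(\ref{WFregular_2})$ is routine, but must likewise be invoked with strict inequalities.
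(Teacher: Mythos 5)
Your proof is correct and follows essentially the same route as the paper's: the same cycle $(3)\Rightarrow(2)\Rightarrow(1)\Rightarrow(3)$, with strict monotonicity of the binomial tails $h_n(p)=\sum_{k=n}^N\binom{N}{k}p^k(1-p)^{N-k}$ for the first step, Theorem~\ref{thm:rc} for the second, and inversion of the Bernstein polynomial $\Upsilon_{\bF}$ at the points $p_i$ for the third. The only differences are cosmetic: you establish $h_n'(p)>0$ via the closed-form incomplete-beta identity $h_n'(p)=n\binom{N}{n}p^{n-1}(1-p)^{N-n}$ instead of the paper's averaging argument with the quantities $\gamma_n$, and you make the strict increase of $\Upsilon_{\bF}$ explicit through its Bernstein-derivative formula where the paper cites references.
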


\begin{proof}
 \ref{WFregular_3}$\Rightarrow$\ref{WFregular_2}. 
  Define $h_n(p)=\sum_{i=n}^N\binom{N}{i}p^i(1-p)^{N-i}$. It is clear that $h_0(p)=1$ and $h_0'(p)=0$ for all $p$. Note that for $p\in(0,1)$ and $n\ge 1$
 \[
  h_n'(p)=\frac{1}{p(1-p)}\sum_{i=n}^N\binom{N}{i}p^i(1-p)^{N-i}(i-pN)\ .
 \]
We define
\[
\gamma_n\bydef\frac{\sum_{i=n}^N\binom{N}{i}\frac{i}{N}p^i(1-p)^{N-i}}{\sum_{i=n}^N\binom{N}{i}p^i(1-p)^{N-i}}\ ,
\]
and $\gamma_0=p$. Furthermore, $\gamma_n$ is the mean of $\sfrac{i}{N}$, from $i=n$ to $i=N$ with probability distribution given by $\binom{N}{i}p^i(1-p)^{N-i}$, and therefore $\gamma_N>\gamma_{N-1}>\dots>\gamma_1>\gamma_0=p$. In particular
$h_n'(p)>0$ for $n>0$. From the fact that $\sum_{j=n}^NM_{ij}=h_n(p_i)>h_n(p_{i-1})$, for all $n>0$ (with equality for $n=0$), we conclude that $\bM$ is strictly stochastically ordered.

  \ref{WFregular_2}$\Rightarrow$\ref{WFregular_1}. It follows immediately from Theorem~\ref{thm:rc}.

 \ref{WFregular_1}$\Rightarrow$\ref{WFregular_3}. Since $\bM$ is regular, we have that the fixation vector $\bF$ is increasing. In this case, as we have already pointed out, $\Upsilon_{\bF}$ is increasing, with $\Upsilon_{\bF}(0)=0$ and $\Upsilon_{\bF}(1)=1$. Furthermore, $\Upsilon_{\bF}(p_i)=F_i$. We conclude that $p_0=1-p_N=0$ and that the vector $\bp=\Upsilon_{\bF}^{-1}\left(\bF\right)$ is strictly increasing.
\end{proof}

\begin{remark}
	The equivalence between conditions 2 and 3 in Theorem~\ref{thm:WFregular_sse} can be seen as the strict dominance equivalence version  of the  more classic dominance case---cf. 
 \cite[Equation (1.1)]{Klenke:Mattner:2010}.
\end{remark}	
The result above shows that not every choice of $\bp$ yields a regular WF process. This naturally leads to  the question of what is the class of non-regular fixation probabilities that the WF process can realise. The next result shows, perhaps surprisingly, that any admissible fixation can be realised, although not necessarily uniquely.

\begin{theorem}[Universality of the Wright-Fisher process]\label{thm:universalityWF}
Let $\bF$ be an admissible fixation vector. Then there exists at least one WF matrix that has $\bF$ as a fixation vector.  In addition,
if $\bF$ is increasing, than such WF matrix is unique. 
\end{theorem}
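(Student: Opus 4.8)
The plan is to reduce the construction of a realising Wright--Fisher matrix to solving, row by row, a single scalar equation involving the Bernstein polynomial $\Upsilon_{\bF}$. By Proposition~\ref{prop:fixationvector}, a vector $\bF=\begin{pmatrix}0&\widetilde{\bF}&1\end{pmatrix}$ is the fixation vector of a Kimura matrix $\bM$ precisely when $\bM\bF^\dagger=\bF^\dagger$. For the WF process the $i$-th component of this identity reads $\sum_{j=0}^N\binom{N}{j}p_i^j(1-p_i)^{N-j}F_j=F_i$, that is, $\Upsilon_{\bF}(p_i)=F_i$. The rows $i=0$ and $i=N$ are automatically satisfied by the convention $p_0=0$, $p_N=1$ (those rows then become the absorbing rows $\be_0$ and $\be_N$), so the whole problem decouples into the $N-1$ independent scalar equations $\Upsilon_{\bF}(p_i)=F_i$, $i=1,\dots,N-1$.

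For existence I would invoke the intermediate value theorem. The polynomial $\Upsilon_{\bF}$ is continuous on $[0,1]$ with $\Upsilon_{\bF}(0)=F_0=0$ and $\Upsilon_{\bF}(1)=F_N=1$; since $\bF$ is admissible we have $F_i\in(0,1)$ for each interior $i$, so there is at least one $p_i\in(0,1)$ with $\Upsilon_{\bF}(p_i)=F_i$. Selecting any such root for each $i$ (and keeping $p_0=0$, $p_N=1$) yields a TSP vector $\bp$ with interior entries strictly between $0$ and $1$; the associated WF matrix then has a strictly positive, hence irreducible, core $\widetilde{\bM}$, so it lies in $\K$, and by construction it satisfies $\bM\bF^\dagger=\bF^\dagger$ with $\bF$ admissible. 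The uniqueness clause of Proposition~\ref{prop:fixationvector} then forces $\bF$ to be exactly its fixation vector.

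For uniqueness when $\bF$ is increasing, I would argue that $\Upsilon_{\bF}$ is then strictly monotone, so each equation $\Upsilon_{\bF}(p_i)=F_i$ has a unique solution in $(0,1)$. Concretely, the standard Bernstein derivative identity gives $\Upsilon_{\bF}'(p)=N\sum_{i=0}^{N-1}(F_{i+1}-F_i)\binom{N-1}{i}p^i(1-p)^{N-1-i}$, which is strictly positive on $(0,1)$ once all increments $F_{i+1}-F_i$ are positive; hence $\Upsilon_{\bF}$ is injective and the $p_i$, and therefore the WF matrix, are uniquely determined.

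The only genuinely delicate point is that $\Upsilon_{\bF}$ need not be monotone for a general admissible $\bF$, so the intermediate value theorem guarantees a root but not its uniqueness---which is precisely why the existence statement claims only ``at least one'' WF matrix. Thus the main work is conceptual: the reduction to $\Upsilon_{\bF}(p_i)=F_i$ and the observation that existence and uniqueness of $\bp$ mirror, respectively, surjectivity of $\Upsilon_{\bF}$ onto the admissible range and its injectivity. The remaining steps are routine applications of the intermediate value theorem and of the positivity of $\Upsilon_{\bF}'$.
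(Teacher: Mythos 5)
Your proposal is correct and follows essentially the same route as the paper: reduce the problem to the scalar equations $\Upsilon_{\bF}(p_i)=F_i$, obtain existence from the intermediate value theorem together with $\Upsilon_{\bF}(0)=0$, $\Upsilon_{\bF}(1)=1$ and $0<\Upsilon_{\bF}(p)<1$ on $(0,1)$, and obtain uniqueness for increasing $\bF$ from the strict monotonicity of the Bernstein polynomial. The only cosmetic differences are that you verify explicitly that the resulting matrix is Kimura and you derive monotonicity from the Bernstein derivative identity, where the paper cites the standard references.
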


\begin{proof}
	Notice that a given $\bF$ is the fixation vector of the Wright-Fisher process defined by $\bp$ if, and only if, we have%
	\[
	\Upsilon_{\bF}(p_i)=F_i,\quad i=0,\ldots,N.
	\]
	On the other hand,  we have that $\Upsilon_{\bF}(0)=0$, and $\Upsilon_{\bF}(1)=1$. In addition, we always have, for $p\in(0,1)$,
	\[
	0< \Upsilon_{\bF}(p)< \sum_{i=0}^N\binom{N}{i}p^i(1-p)^{N-i}=(p+(1-p))^N=1.
	\]
	  Since $\Upsilon_{\bF}$ is continuous,  the intermediate value theorem implies that $\Upsilon_{\bF}$ is onto $[0,1]$.  Thus,  given any admissible fixation vector $\bF$, there exists at least one type selection probability vector $\bp$ such that $\Upsilon_{\bF}(p_i)=F_i$. Furthermore, if $\bF$ is increasing, then $\Upsilon_{\bF}$ is also increasing and therefore $\bp$ is uniquely defined.
\end{proof}

\subsection{Evolutionary game theory and regularity of WF processes}

Most of the cases of non-constant fitnesses functions studied in the mathematical literature considers fitnesses obtained from evolutionary game theory, where pay-off are computed using two-player games. This corresponds to affine fitnesses functions, which are the simplest class of non constant functions. In this framework, \citet[Lemma~1]{Imhof:Nowak:2006} 
have shown that the corresponding WF matrix is totally-positive of order 2, and hence it is monotone---cf. \citet[Remark~1.1]{KK77}. Therefore, Proposition~\ref{prop:wr} then implies that these processes are weakly-regular. We now strength this result, and show that in its simplest and traditional setting, 2-player games, evolutionary game theory leads to increasing type selection probabilities, and hence to regular WF processes. At the end of this section, however, we provide an example which shows the existence of non-regular fixation patterns  in  WF processes at  the next level of generalization, i.e., in three-players game theory (quadratic fitnesses functions).

\begin{proposition}\label{prop:WFJogos_regular}
	If fitnesses functions are positive and affine, then the type selection probability vector $\bp$ is increasing.
\end{proposition}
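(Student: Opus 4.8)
The plan is to read $\bp$ off the parametrisation~\eqref{eq:p_and_fit_rep} and prove monotonicity through the odds ratio. Writing $a(x)\bydef\varphi^{(\A)}(x)$ and $b(x)\bydef\varphi^{(\B)}(x)$ for the (positive, affine) fitnesses, \eqref{eq:p_and_fit_rep} gives
\[
\frac{p_i}{1-p_i}=\frac{i}{N-i}\,\frac{a(i)}{b(i)}.
\]
Since $t\mapsto t/(1-t)$ is increasing on $(0,1)$, it suffices to show the right-hand side is strictly increasing in $i$. I would \emph{not} try to argue that the fitness ratio $a/b$ is monotone --- it need not be, as affineness places no constraint on the signs of the individual slopes --- but instead treat the whole expression at once. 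Because $a$ and $b$ are affine and strictly positive on $\{0,\dots,N\}$, they are positive throughout $[0,N]$, so
\[
P(x)\bydef\frac{x\,a(x)}{x\,a(x)+(N-x)b(x)}
\]
is well defined and smooth on $(0,N)$ with $p_i=P(i)$. Thus it is enough to show $P$ is strictly increasing, equivalently that $\Lambda(x)\bydef\frac{\rd}{\rd x}\log\frac{P(x)}{1-P(x)}>0$ on $(0,N)$.

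Differentiating, $\Lambda(x)=\frac1x+\frac{1}{N-x}+\frac{a'(x)}{a(x)}-\frac{b'(x)}{b(x)}$, where $a',b'$ denote the constant slopes. The heart of the argument is a pair of bounds coming purely from positivity of the affine fitnesses at the endpoints $0$ and $N$. First, a direct computation gives
\[
\frac{a'(x)}{a(x)}+\frac{1}{N-x}=\frac{a'(x)(N-x)+a(x)}{a(x)(N-x)}=\frac{a(N)}{a(x)(N-x)}>0,
\]
where $a'(x)(N-x)+a(x)=a(N)$ holds because $a$ is affine, and $a(N)>0$. Symmetrically,
\[
\frac1x-\frac{b'(x)}{b(x)}=\frac{b(x)-b'(x)x}{x\,b(x)}=\frac{b(0)}{x\,b(x)}>0,
\]
using $b(x)-b'(x)x=b(0)$ and $b(0)>0$.

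Adding these two inequalities yields $\Lambda(x)>0$ for every $x\in(0,N)$, so $P$ is strictly increasing there; together with $p_0=0$ and $p_N=1$ this shows the full vector $\bp$ is increasing. The only delicate point --- and the one I expect to be the main obstacle to spot --- is that the slopes $a',b'$ may have either sign, so neither $\frac1x+\frac{a'}{a}$ nor $\frac{1}{N-x}-\frac{b'}{b}$ need be individually positive; the correct grouping pairs $\frac{a'}{a}$ with $\frac{1}{N-x}$ and $-\frac{b'}{b}$ with $\frac1x$, which is exactly what the two endpoint-positivity identities exploit. Once the pairing is found the estimates are immediate. A fully discrete alternative avoids the continuous extension altogether: monotonicity of $p_i$ is equivalent to positivity of the Casoratian $D(i)\bydef f(i+1)g(i)-f(i)g(i+1)$ with $f(i)=i\,a(i)$ and $g(i)=(N-i)b(i)$, and the discrete forms of the same two bounds, $\frac{a'}{a(i)}>-\frac{1}{N-i}$ and $\frac{b'}{b(i)}<\frac1i$, give $D(i)>0$ after dividing by $a(i)b(i)>0$, with the boundary cases $i=0$ and $i=N-1$ following from $f(0)=g(N)=0$.
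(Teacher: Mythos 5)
Your proof is correct, and it takes a genuinely different route from the paper's. Both arguments start from the same reduction --- monotonicity of $\bp$ is equivalent to monotonicity of the odds $\frac{p_i}{1-p_i}=\frac{i}{N-i}\frac{\varphi^{(\A)}(i)}{\varphi^{(\B)}(i)}$, extended to a smooth function of $x$ --- but they diverge at the key step. The paper differentiates the odds directly, obtains a quadratic numerator $g(x)$, and establishes its positivity on $[0,1]$ by checking the endpoint values and then running a case analysis on the location and value of the critical point $x_0$. You instead take the logarithmic derivative and split it as $\left(\frac{1}{x}-\frac{b'}{b(x)}\right)+\left(\frac{1}{N-x}+\frac{a'}{a(x)}\right)$, where the two affine identities $a'(N-x)+a(x)=a(N)$ and $b(x)-b'x=b(0)$ make each group manifestly positive. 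This eliminates the case analysis entirely, isolates exactly where affineness and endpoint positivity enter, and (as you note) only uses positivity of the two affine fitnesses at $x=0$ and $x=N$, which is marginally weaker than the paper's standing assumption that all six pay-off parameters $a,b,c,d,\alpha,\beta$ are positive. Your discrete Casoratian variant is also correct: the two bounds $a'(N-i)+a(i)=a(N)>0$ and $b(i)-b'i=b(0)>0$ sandwich both sides of $f(i+1)g(i)>f(i)g(i+1)$ against the common value $(i+1)(N-i-1)$, and the boundary cases follow from $f(0)=g(N)=0$. The paper's approach has the minor advantage of producing the explicit quadratic $g$, whose sign pattern could in principle be reused to classify exactly when monotonicity fails for non-affine (e.g.\ quadratic) fitnesses, as in the three-player example that follows the proposition; your decomposition is tailored to the affine case but is shorter and more transparent there.
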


\begin{proof}
	
	Let $\Psi^{(\A)}(i)=ai+b(N-i)+\alpha$ and $\Psi^{(\B)}(i)=ci+d(N-i)+\beta$, with $a,b,c,d,\alpha,\beta>0$. 
	We will show that the sequence
	\[
	p_i\bydef\frac{i\Psi^{(\A)}(i)}{i\Psi^{(\A)}(i)+(N-i)\Psi^{(\B)}(i)}=\frac{i(ai+b(N-i)+\alpha)}{i(ai+b(N-i)+\alpha)+(N-i)(ci+d(N-i)+\beta)}
	\]
	is increasing. Initially, let us show that
	\[
	f(x)\bydef \frac{x}{1-x}\frac{ax+b(1-x)+\alpha'}{cx+d(1-x)+\beta'}\ ,
	\]
	with $\alpha'=\alpha/N$ and $\beta'=\beta/N$,
	is increasing in the interval $[0,1]$. Let $x=\sfrac{i}{N}$ and write
	\begin{align*}
	p_i&=\left(1+\frac{(N-i)(ci+d(N-i)+\beta)}{i(ai+b(N-i)+\alpha)}\right)^{-1}=\left(1+\frac{1-x}{x}\frac{cx+d(1-x)+\beta'}{ax+b(1-x)+\alpha'}\right)^{-1}\\
	&=\left(1+f(x)^{-1}\right)^{-1}\ .
	\end{align*}
	
	Differentiating $f$ (and dropping primes), we find
	\[
	f'(x)=\frac{g(x)}{(1-x)^2(cx+d(1-x)+\beta)^2}\ ,
	\]
	where 
	$g(x)=(d+\beta)(b-a)(x-1)^2+(c-d)(a+\alpha)x^2+(d+\beta)(a+\alpha)$.
	$f$ is increasing if and only if $g$ is positive. Function $g(x)$ is quadratic with $g(0)=(b+\alpha)(d+\beta)>0$ and $g(1)=(a+\alpha)(c+\beta)>0$.
	Furthermore, we define
	\[
	x_0\bydef \left(1+\frac{(a+\alpha)(c-d)}{(d+\beta)(b-a)}\right)^{-1}\ , 
	\]
	such that $g'(x_0)=0$. 
	If $x_0\not\in[0,1]$, then $g$ is monotone in $[0,1]$ and from the fact that it is positive on the borders, it will be positive everywhere.
	Now, assume that $x_0\in[0,1]$.
	We have that
	\[
	g(x_0)=\frac{(a+\alpha)((d-c)\alpha+(a-b)\beta+ad-bc)}{a-b}x_0\ ,
	\]
	and $\frac{(a+\alpha)(c-d)}{(d+\beta)(b-a)}>0$. Therefore
	\begin{enumerate}
		\item If $c-d>0$ and $b-a>0$. Then, $bc>ad$ and consequently $g(x_0)>0$.
		\item If $c-d<0$ and $b-a<0$. Then, $bc<ad$ and consequently $g(x_0)>0$.
	\end{enumerate}
	We conclude that $g(x)>0$ for all $x\in[0,1]$. We have that $f$ is increasing in $[0,1]$ and we conclude $p_i=\left(1+f\left(\sfrac{i}{N}\right)^{-1}\right)^{-1}$ is increasing.	
\end{proof}

\begin{corollary}
 If the fitnesses functions are positive affine, then the Wright-Fisher matrix is regular.
\end{corollary}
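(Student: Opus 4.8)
The plan is simply to chain the two results that immediately precede this statement, since the corollary is an instance of the implication \ref{WFregular_3}$\Rightarrow$\ref{WFregular_1} in Theorem~\ref{thm:WFregular_sse}. Proposition~\ref{prop:WFJogos_regular} already establishes that positive affine fitness functions produce a type selection probability vector $\bp$ that is increasing, and being increasing is precisely condition~\ref{WFregular_3} of Theorem~\ref{thm:WFregular_sse}.

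Concretely, I would first invoke Proposition~\ref{prop:WFJogos_regular} to obtain, from the hypothesis that the fitness functions are positive and affine, that the associated vector $\bp$ is increasing. Then I would apply Theorem~\ref{thm:WFregular_sse}, whose equivalence \ref{WFregular_3}$\Rightarrow$\ref{WFregular_1} asserts that an increasing $\bp$ forces the corresponding Wright-Fisher process to be regular. Combining these two steps yields that the Wright-Fisher matrix built from such fitnesses is regular, which is the claim.

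There is no genuine obstacle at this level: the only thing to verify is that the hypothesis of the corollary matches exactly the hypothesis of Proposition~\ref{prop:WFJogos_regular}, whose conclusion is in turn exactly the third (and hence, by the theorem, the first) of the equivalent characterisations of regular Wright-Fisher processes. All the substantive work — the polynomial positivity argument establishing that $g(x)>0$ in the proof of Proposition~\ref{prop:WFJogos_regular}, and the Bernstein-polynomial and $h_n'(p)>0$ monotonicity arguments underpinning Theorem~\ref{thm:WFregular_sse} — has already been completed, so the proof amounts to recording the composition of the two.
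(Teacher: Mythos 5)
Your proposal is correct and is exactly the argument the paper intends: the corollary is stated without proof precisely because it is the composition of Proposition~\ref{prop:WFJogos_regular} (positive affine fitnesses give an increasing $\bp$) with the implication \ref{WFregular_3}$\Rightarrow$\ref{WFregular_1} of Theorem~\ref{thm:WFregular_sse}. Nothing further is needed.
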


The next example shows that, if we depart from the realm of 2-player games, then we can {have WF processes that are not even weakly-regular:

\begin{example}[{\protect A non-regular three-player game}]
Let $\varphi^{(\A)}(x)=15-24x+10x^2$ and $\varphi^{(\B)}(x)=1+14x^2$, which are strictly positive in the interval $[0,1]$, 
	then $p_i$ given from~\eqref{eq:p_and_fit_rep} is not increasing. These functions can be obtained from 3-player game theory, with $a_0 = 15$,  $a_1 = 3$,  $a_2 = 1$, $b_0 = 1$, $b_1 = 1$, $b_2 = 15$, where $a_k$ ($b_k$) is the pay-off of a type \A (\B,  respectively) player against $k$ other players. Note that the relative fitness $\Psi^{(\A)}/\Psi^{(\B)}=\varphi^{(\A)}/\varphi^{(\B)}$ is decreasing and is associated to coexistence games (i.e., $\Psi^{(\A)}/\Psi^{(\B)}>1$ for $x$  near zero, and $\Psi^{(\A)}/\Psi^{(\B)}<1$ for  $x$ near one).
\end{example}

\subsection{Alternative processes}\label{ssec:further}

	We finish this section with some comments about other models presented in the literature. We begin by discussing two models introduced as alternative dynamics that belong to the Kimura class and for which all the theory developed so far applies directly. In the sequel, we discuss two models  that are  likely to be unrealistic for most biological populations, and hence should be largely taken as pedagogical examples. The first one belongs to class $\K_1$ and, as noted in Remark~\ref{rmk:L1}, all general theorems apply. The second belongs to the class $\K_0$, and it can be seen as process in the boundary of the Kimura class.

\paragraph{Pairwise Comparison}	
		Another Birth-Death process that has also been  used as a model of evolutionary dynamics is the so-called pairwise-comparison process~(PC)~\citep{Traulsen2007522}, whose transition matrix is given by :
		\begin{equation*}
		M_{ij}=\left\{
		\begin{array}{ll}
		\frac{i(N-i)}{N^2}(1-q_i)\ ,\quad&i=j+1\ ,\\
		\frac{(N-i)^2+i^2+i(N-i)}{N^2}\ ,&i=j\ ,\\
		\frac{i(N-i)}{N^2}q_i\ ,&i=j-1\ ,\\
		0\ ,&|i-j|>1\ .
		\end{array}
		\right.
		\end{equation*}
		where $q_i$ ($1-q_i$) is the probability that \A replaces \B (\B replaces \A, respec.) in a pair contest.
		This process satisfies 
		\[
		\E[X^{PC}_{k+1}|X^{PC}_k=i]=i+\frac{i(N-i)}{N^2}(2q_i-1)
		\]
		Hence a PC process is neutral when $q_i=\sfrac{1}{2}$.
		We point out that this fact  was implicitly stated in~\citet{AltrockTraulsen,Hilbe}. As a matter of fact, when the intensity of selection  converges to zero, the replacement probability converges to $\sfrac{1}{2}$. Moreover, it is easily verified that every PC process is banded stochastically ordered, and therefore regular, inasmuch as we have that
		\[
		M_{i,i+1}+M_{i+1,i}=\frac{i(N-i)}{N^2}q_i+\frac{(i+1)(N-i-1)}{N^2}(1-q_{i-1})<\frac{1}{4}+\frac{1}{4}<1\ .
		\]
		Furthermore, given any increasing fixation vector $\bF$  we have it is realised by a PC process upon choosing
		\[
		q_i=\frac{G_{i-1}}{G_i+G_{i-1}}\in(0,1),
		\]
		with $G_i$ given by Equation~\eqref{eqn:gj_def}. Finally, we observe that while $\mathbf{q}$ is also a  type selection probability, it is based on a sample over  pairs instead of a sample over the whole population

\paragraph{Eldon-Wakeley}

			This model was introduced in~\citet{EldonWakeley}. It is an intermediate model between the Moran and the Wright-Fisher process, in which at each time step,  one individual is selected to reproduce, according to a TSP vector $\bp$, 
			and begets $U-1$ new individuals, $U\in\{1,\dots,N\}$. The parent	persists, while its offspring replace $U-1$ individuals who are selected with equal probability to die among the remaining individuals. The original work studied the neutral case, i.e., $p_i=\sfrac{i}{N}$, when it can be easily checked that $F_i=\sfrac{i}{N}$. Using the notation of the current work, the transition matrix is given by 
			\[
			 M_{ij}=p_iM^{(1)}_{ij}+(1-p_i)M^{(2)}_{ij}\ ,
			\]
			where
			\begin{align*}
			M_{ij}^{(1)}&=\binom{N-i}{j-i}\binom{i-1}{U-1-j+i}\binom{N-1}{U-1}^{-1}\ ,\\
			M_{ij}^{(2)}&=\binom{i}{i-j}\binom{N-1-i}{U-1+j-i}\binom{N-1}{U-1}^{-1}.
\end{align*}
			We use that $\binom{a}{b}=0$ whenever $b<0$ or $b>a$, and therefore $\bM^{(1)}$ ($\bM^{(2)}$) is lower (upper, respect.) triangular matrix. 
			After some simplifications, we find that for $N+1\ge n\ge 1$, $0\le i\le N$, 
			\begin{align*}
			 \sum_{j=n}^N\left(M^{(1)}_{i+1,j}-M^{(1)}_{ij}\right)&=\frac{(n-U)\binom{N-U}{N-n}\binom{U-1}{U-n+i}}{(N-i)\binom{N-1}{i-1}}
			 \left\{\begin{array}{l}
			        >0\ ,\quad U+i\ge n\ge\max\{i+1,U+1\}\\
			 =0\ ,\quad \text{otherwise}\ ,
			        \end{array}\right.\\
			\sum_{j=n}^N\left(M^{(2)}_{i+1,j}-M^{(2)}_{ij}\right)&=\frac{\binom{N-i-2}{N-U-n}\binom{i}{n-1}}{\binom{N-1}{N-U}}
			\left\{\begin{array}{l}
			        >0\ ,\quad \min\{i+1,N-U\}\ge n\ge i+2-U\ ,\\
			        =0\ ,\quad\text{otherwise}\ .
			       \end{array}\right.
			\end{align*}
			The associated $\bL$ matrix is an irreducible Metzler matrix and from the discussion in Remark~\ref{rmk:Metzler}, we conclude the Eldon-Wakeley process is regular for any choice of the TSP $\bp$. Note that  $\bM\not\in\BSO_{N+1}$, however the restriction to the matrix $\bar\bM=(M_{ij})_{i,j=U,\dots,N-U}$ is BSO, with band of size $U-1$.

As a last remark, we observe that in the original work \citep{EldonWakeley}, the parameter $U$ is a random variable with values in $\{2,\ldots,N\}$. Therefore, in our notation, we shall consider a matrix $\bM=\sum_{U=2}^{N}\wp(U)\bM_U$, where $\bM_U$ is the transition matrix given above for a certain fixed $U$, and $\wp$ is a probability mass function of $U$. Recalling  that the Kimura class in convex, we have that both the full original model~\citep{EldonWakeley} and the particular case studied in~\citet{der2011generalized} are accounted for, and the regularity of the convex combination follows from the fact that a convex combination of Metzler matrices is always of Metzler type. Notice also that, for $U=2$, this model is closely related to the Moran model, the only difference being that a newborn cannot  replace its parent.
		
\paragraph{$\Lambda_1$-model} This model was introduced in~\citet{der2011generalized} and while, in the authors words, is not realistic for most populations, it can be seen as a process that is ``antipodal" to the Wright-Fisher and that can be used to understand extremal behaviour in the evolutionary class. 
In this process, at each time step either nothing happens or one individual replaces the entire population. Here, we show how a small variation of our approach would apply in this case. Given a TSP vector $\bp$, we define the generalised $\Lambda_1$-process by the matrix

\[
M_{ij}=\left\{
\begin{array}{ll}
\frac{1-p_i}{N},&i=0,\ldots,N,\; j=0;\\
1-\frac{1}{N},&i=j,\;i=1,\ldots,N-1;\\
\frac{p_i}{N},&i=0,\ldots,N,\; j=N.
\end{array}
\right.
\]
A direct calculation shows that $F_i=p_i$, and hence that this is model is regular if, and only if, the corresponding TSP is increasing. Also, another direct calculation shows that $\bM\in\SSO$ if, and only if, $\bp$ is increasing. Hence, Theorem~\ref{thm:WFregular_sse} also holds for the $\Lambda_1$ process.

We stress that, since this model belongs to the $\K_1$ class, all the generic results holds for the $\Lambda_1$ process, including the discussion on time-inhomogeneous processes in Section~\ref{sec:inhomogeneous}. Nevertheless, the reasons as why  this model is considered by their authors as ``unrealistic'' are possible the same as it is not included in the Kimura class, i.e., that mixed states are not necessarily accessible, even considering long time intervals, from any mixed initial condition.

\paragraph{Lethal mutation}
   The process such that $p_i=0$ for $i\in\{1,\dots,N-1\}$~\citep{Schuster_MDL}, is  non-Kimura. Assume, however a family of increasing TSPs $\bp^{(\epsilon)}$, $\epsilon\ge 0$ such that $\bp^{(\epsilon)}\to\bp^{(0)}=(0,0,\dots,1)$. For each value of $\epsilon>0$ we define the WF transition matrix $\bM^{(\epsilon)}$ and it is clear that $M^{(\epsilon)}_{ij}\to\delta_{0j}$ for $i<N$ and $M_{iN}^{(\epsilon)}\to\delta_{iN}$. The continuity of $\bF$ with respect to $\bM$ --- that follows from Proposition~\ref{prop:fixationvector} --- implies that, as $\epsilon\to0$, $\bF^{(\epsilon)}\to(0,0,\dots,1)$, which is the fixation probability of the matrix $\bM^{(0)}$. On the other hand, if we assume the Moran process, then $M^{(\epsilon)}_{i,i-1}\to \sfrac{i}{N}$, $M^{(\epsilon)}_{ii}\to\sfrac{(N-i)}{N}$ and $M^{(\epsilon)}_{i,i+1}\to 0$ for $i<N$. The fixation vector is exactly the same as the one for the WF process and the limit of the fixation vector when $\epsilon\to0$ is also the fixation vector of the limit matrix. We conclude that the limit models of both M and WF models, in the case of lethal mutation, can be understood as limits of Kimura regular models, and therefore they belong to the boundary of the set of regular matrices. It is clear from the previous discussion that the fixation vector $\bF=(0,0,\dots,0,1)$ is to be expected, independently of the precise way this model is built. The limit model belongs to the class $\K_0$ discussed in Remark~\ref{rmk:L1}, and therefore the fixation vector exists but is not admissible. Finally, notice that if type \A represents a lethal mutation, then the assumption of constant population is artificial when applied to state $i=N$, and this explains the discontinuity of fixation in the limit model.

\subsection{Regular and smooth fixation  for large populations}

\label{sec:weak}

	Given an increasing fixation vector $\bF$, Theorems~\ref{thm:Moran_fix} and \ref{thm:universalityWF} show that there are unique vectors $\bp_M$ and $\bp_{WF}$ that realises this fixation vector for the Moran process and  for the Wright-Fisher process, respectively. We now want to study the behaviour of type selection vectors, if the following conditions are met:

	\begin{enumerate}
		\item $F_i=\phi(\sfrac{i}{N})$, with $\phi:[0,1]\to[0,1]$ being sufficiently smooth;
		\item $N$ is large, but still finite.
	\end{enumerate}
	It turns out that such assumptions imply that the corresponding TSPs are close to the neutral ones, and hence that they are equivalent to assume the weak-selection regime. We begin with Moran process:

\begin{proposition}
	Let $\phi:[0,1]\to[0,1]$ be a $C^2$ function, and assume that $F_i=\phi(\sfrac{i}{N})$. Assume also that $N$ is sufficiently large. Then 
	\[
	p_i^{(N)}=\frac{i}{N}-\delta_N\frac{i}{N}\left(1-\frac{i}{N}\right)\frac{\phi''(\sfrac{i}{N})}{\phi'(\sfrac{i}{N})}+\mathcal{O}(\sfrac{1}{N^2}),
	\]
	with $0<\delta_N<\sfrac{2}{N}$, is such that the fixation vector associated with a Moran process given by the type selection probability $\bp$ is $\bF$.
\end{proposition}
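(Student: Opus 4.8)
The plan is to start from Theorem~\ref{thm:Moran_fix}. Since $\bF$ with $F_i=\phi(\sfrac{i}{N})$ is (implicitly) an increasing admissible fixation vector, the Moran process realising it is \emph{unique}, and its type selection probabilities are given exactly by Equation~\eqref{eq:pj_fixation}. Hence it suffices to take $p_i$ to be that expression and establish the stated asymptotic expansion; no modelling freedom remains, so the final clause (``\,is such that the fixation vector \ldots\ is $\bF$\,'') is automatic and the whole content is an asymptotic analysis of~\eqref{eq:pj_fixation}.

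The first concrete step is a purely algebraic simplification of $p_i-\frac{i}{N}$. Writing $a_i\bydef F_i-F_{i-1}$ and $b_i\bydef F_{i+1}-F_i$ (both positive, since $\phi$ is increasing), a direct computation from~\eqref{eq:pj_fixation} gives
\[
p_i-\frac{i}{N}=\frac{i(N-i)\,(a_i-b_i)}{N\,[\,i\,a_i+(N-i)\,b_i\,]}.
\]
The numerator difference is minus a centered second difference, $a_i-b_i=-(F_{i+1}-2F_i+F_{i-1})$, while the bracket in the denominator is a convex combination of a backward and a forward slope.

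Next I would pass from finite differences to derivatives via the mean-value theorems for divided differences, which is exactly where the hypothesis $\phi\in C^2$ enters. The centered second difference satisfies $F_{i+1}-2F_i+F_{i-1}=\frac{1}{N^2}\phi''(\xi_i)$ for some $\xi_i\in\left(\frac{i-1}{N},\frac{i+1}{N}\right)$; the one-sided slopes satisfy $a_i=\frac{1}{N}\phi'(\eta_i^-)$ and $b_i=\frac{1}{N}\phi'(\eta_i^+)$, so that $i\,a_i+(N-i)\,b_i=\frac{i}{N}\phi'(\eta_i^-)+\frac{N-i}{N}\phi'(\eta_i^+)=\phi'(\zeta_i)$ is itself a value of $\phi'$ at some $\zeta_i$ in the same interval, by the intermediate value theorem applied to the continuous $\phi'$. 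Substituting and using $\frac{i(N-i)}{N^3}=\frac{1}{N}\frac{i}{N}\left(1-\frac{i}{N}\right)$ yields the clean exact identity
\[
p_i-\frac{i}{N}=-\frac{1}{N}\,\frac{i}{N}\left(1-\frac{i}{N}\right)\frac{\phi''(\xi_i)}{\phi'(\zeta_i)}.
\]

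It then remains to define $\delta_N$ and control it. Setting $\delta_N\bydef\frac{1}{N}\frac{\phi''(\xi_i)\,\phi'(\sfrac{i}{N})}{\phi''(\sfrac{i}{N})\,\phi'(\zeta_i)}$ makes the displayed expansion hold with the target coefficient $\frac{\phi''(\sfrac{i}{N})}{\phi'(\sfrac{i}{N})}$; since $\xi_i,\zeta_i$ lie within $\sfrac{1}{N}$ of $\sfrac{i}{N}$ and $\phi',\phi''$ are uniformly continuous on $[0,1]$ (with $\phi'>0$ and, generically, $\phi''(\sfrac{i}{N})\neq0$), this ratio tends to $1$, whence $0<\delta_N<\sfrac{2}{N}$ for $N$ large, and the discrepancy between $\frac{\phi''(\xi_i)}{\phi'(\zeta_i)}$ and $\frac{\phi''(\sfrac{i}{N})}{\phi'(\sfrac{i}{N})}$ is of lower order and contributes the $\mathcal{O}(\sfrac{1}{N^2})$ remainder. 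The main obstacle is precisely this last bookkeeping: with only $C^2$ regularity the replacement of $\xi_i,\zeta_i$ by $\sfrac{i}{N}$ is a priori only $o(1)$, so to produce a genuine $\mathcal{O}(\sfrac{1}{N^2})$ term together with the two-sided bound one must either keep the mean-value points absorbed inside $\delta_N$ as above, or strengthen to a Taylor expansion with a quantitative remainder; dispatching the points where $\phi''$ vanishes, at which $\delta_N$ is interpreted by continuity and the correction term itself vanishes, is the accompanying edge case.
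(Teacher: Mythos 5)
Your proposal is correct and follows essentially the same route as the paper: both start from Theorem~\ref{thm:Moran_fix} (Equation~\eqref{eq:pj_fixation}) and convert the first differences of $\phi$ into derivatives via the mean value theorem, the only real difference being that you extract the exact second-difference identity for $p_i-\sfrac{i}{N}$ while the paper substitutes the mean-value forms into the fraction and expands $(1+\epsilon)^{-1}$ directly. The bookkeeping caveat you flag --- that with only $C^2$ regularity the replacement of the mean-value points by $\sfrac{i}{N}$ yields $o(\sfrac{1}{N})$ rather than a genuine $\mathcal{O}(\sfrac{1}{N^2})$, and that the positivity of $\delta_N$ is delicate near zeros of $\phi''$ --- applies equally to the paper's own one-line expansion, which silently absorbs the same error.
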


	\begin{proof}
		From Theorem~\ref{thm:Moran_fix} and on using the mean value theorem, we have
		
		\begin{align*}
		p_i^{(N)}&=\frac{(\sfrac{i}{N})\phi'(x_1)}{\left(1-\sfrac{i}{N}\right)\phi'(x_2)+(\sfrac{i}{N})\phi'(x_1)}&&x_1\in(\sfrac{i-1}{N},\sfrac{i}{N}),\; x_2\in (\sfrac{i}{N},\sfrac{i+1}{N})\\
		&=\frac{\sfrac{i}{N}}{\left(1-\sfrac{i}{N}\right)\frac{\phi'(x_2)-\phi'(x_1)}{\phi'(x_1)}+1}\\
		&=\frac{i}{N}-\delta_N\frac{i}{N}\left(1-\frac{i}{N}\right)\frac{\phi''(\sfrac{i}{N})}{\phi'(\sfrac{i}{N})}+\mathcal{O}(\sfrac{1}{N^2}),
		\end{align*}
	with $0<\delta_N<\sfrac{2}{N}$.
	\end{proof}

\begin{remark}
	Notice that if we use the fixation probability yielded by Replicator-Diffusion equation, then $-\sfrac{\phi''(x)}{\phi'(x)}$ is the gradient of selection, i.e., the difference between fitnesses of types \A and \B, in the weak selection regime~\citep{ChalubSouza09b,ChalubSouza_2016}.
\end{remark}	
	
In order to deal with this question for the   WF process, we need a  result from approximation theory:
\begin{lemma}[See~\citet{Estep:2002}, Section 3.6]
	\label{thm:bernstein_id} 
	Let $f:[0,1]\to\R$ be a Lipschitz continuous function, with Lipschitz constant $K$, and let
	\[
	\Upsilon^N_f(x)=\sum_{i=0}^Nf(\sfrac{i}{N})B_{i,N}(x),\quad B_{i,N}(x)=\binom{N}{i}x^i(1-x)^{N-i}.
	\]
	Then
	\[
	\|f-\Upsilon^N_f\|_\infty\leq \frac{9K}{4N^{1/2}}.
	\]

\end{lemma}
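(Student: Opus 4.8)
The plan is to exploit the fact that the Bernstein basis is a partition of unity, $\sum_{i=0}^N B_{i,N}(x)=1$, so that the error becomes a single weighted average of increments of $f$. First I would write, for each fixed $x\in[0,1]$,
\[
f(x)-\Upsilon^N_f(x)=\sum_{i=0}^N\left(f(x)-f(\sfrac{i}{N})\right)B_{i,N}(x),
\]
and apply the Lipschitz bound $|f(x)-f(\sfrac{i}{N})|\le K|x-\sfrac{i}{N}|$ termwise. Since $B_{i,N}(x)\ge 0$, this reduces the whole estimate to controlling, uniformly in $x$, the first absolute moment
\[
S_N(x)\bydef\sum_{i=0}^N\left|x-\sfrac{i}{N}\right|B_{i,N}(x),
\]
giving $\|f-\Upsilon^N_f\|_\infty\le K\sup_x S_N(x)$. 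No separate endpoint analysis is needed, since $B_{i,N}(0)=\delta_{i0}$ forces $\Upsilon^N_f(0)=f(0)$, and similarly at $x=1$.

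The heart of the argument is a second-moment computation. I would interpret $B_{i,N}(x)$ as the probability mass function of a binomial random variable $Y$ with parameters $N$ and $x$, whose mean is $Nx$ and whose variance is $Nx(1-x)$, whence
\[
\sum_{i=0}^N\left(x-\sfrac{i}{N}\right)^2B_{i,N}(x)=\frac{\mathrm{Var}(Y)}{N^2}=\frac{x(1-x)}{N}\le\frac{1}{4N}.
\]
Estimating $S_N(x)$ by the Cauchy--Schwarz inequality against the partition of unity then yields $S_N(x)\le(x(1-x)/N)^{1/2}\le(2\sqrt{N})^{-1}$, so that in fact $\|f-\Upsilon^N_f\|_\infty\le K/(2\sqrt{N})$, which is sharper than --- and hence implies --- the stated bound. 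The constant $9/4$ recorded in the statement is the one obtained by the alternative, slightly less efficient route of splitting the index set at $|\sfrac{i}{N}-x|=\delta$, controlling the near indices by $K\delta$ and the far indices through the same variance identity via Chebyshev's inequality, and then taking $\delta=N^{-1/2}$; for that precise value I would simply cite Estep.

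The only genuinely delicate point is the estimate of $S_N(x)$: the increments $|x-\sfrac{i}{N}|$ are not small for indices far from $Nx$, so the sum cannot be bounded by a uniform $\delta$ alone, and it is the variance identity that tames the tail. I expect this to be the main obstacle, while everything else --- the partition of unity, the termwise Lipschitz bound, and the reduction from the supremum norm to $\sup_x S_N(x)$ --- is routine.
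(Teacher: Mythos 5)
Your proof is correct. The paper itself does not prove this lemma --- it only cites Estep, and the constant $\sfrac{9}{4}$ betrays the argument used there: split the indices at $|\sfrac{i}{N}-x|=\delta$, bound the near part by $K\delta$, control the far part with Chebyshev's inequality applied to the binomial variance, and optimise over $\delta\sim N^{-1/2}$. You correctly diagnose this, and your route is genuinely different and cleaner: after the partition-of-unity decomposition and the termwise Lipschitz bound, you apply Cauchy--Schwarz with the weights $B_{i,N}(x)$ directly to the first absolute moment, so that
\[
\sum_{i=0}^N\left|x-\sfrac{i}{N}\right|B_{i,N}(x)\leq\left(\sum_{i=0}^N\left(x-\sfrac{i}{N}\right)^2B_{i,N}(x)\right)^{1/2}=\left(\frac{x(1-x)}{N}\right)^{1/2}\leq\frac{1}{2N^{1/2}},
\]
yielding $\|f-\Upsilon^N_f\|_\infty\leq \sfrac{K}{(2N^{1/2})}$, which is strictly sharper than the stated bound and hence implies it. What the Chebyshev splitting buys in exchange for its worse constant is flexibility --- it adapts to moduli of continuity other than Lipschitz, which is presumably why Estep presents it that way --- but for the Lipschitz case needed here your one-line Cauchy--Schwarz estimate is the better argument, and every step of it (the partition of unity, the binomial variance identity, the uniform bound $x(1-x)\leq\sfrac{1}{4}$) checks out.
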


This result implies in the following important result

\begin{theorem}[Continuity of fixation and weak-selection]
	Assume that the fixation probability is described by an increasing continuously differentiable function $\phi:[0,1]\to\R$, such that there exists constants $K_-,K_+>0$, with $K_-\leq\phi'(x)\leq K_+$, $x\in[0,1]$. Consider the WF process with a population of size $N$, and let $\bp^{(N)}$ be a vector of type selection probabilities such that the corresponding fixation vector $\bF^{(N)}$ satisfies $F^{(N)}_i=\phi(\sfrac{i}{N})$.  Then	
	\[
	\left|\frac{i}{N}-p^{(N)}_i\right|\leq \frac{9K_+}{4K_-N^{1/2}},\quad i=1,\ldots,N-1.
	\]	
\end{theorem}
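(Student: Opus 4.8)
The plan is to connect the type selection probability $p_i^{(N)}$ directly to the Bernstein polynomial machinery already developed for the Wright-Fisher process. By Theorem~\ref{thm:universalityWF}, since $\phi$ is increasing the vector $\bp^{(N)}$ is the unique solution of $\Upsilon_{\bF^{(N)}}(p_i^{(N)})=F_i^{(N)}=\phi(\sfrac{i}{N})$. The crucial observation is that the Bernstein polynomial $\Upsilon_{\bF^{(N)}}$ is precisely the Bernstein approximant $\Upsilon^N_\phi$ of the function $\phi$, because $F_i^{(N)}=\phi(\sfrac{i}{N})$ are exactly the sample values used in Lemma~\ref{thm:bernstein_id}. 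Thus Lemma~\ref{thm:bernstein_id} applies with $f=\phi$ and Lipschitz constant $K=K_+$ (since $\phi'\le K_+$), giving the uniform bound $\|\phi-\Upsilon_{\bF^{(N)}}\|_\infty\le \sfrac{9K_+}{4N^{1/2}}$.

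The second step is to turn this approximation bound into a bound on $|p_i^{(N)}-\sfrac{i}{N}|$. First I would evaluate the defining identity at the point $\sfrac{i}{N}$ in the two available ways: on one hand $\Upsilon_{\bF^{(N)}}(p_i^{(N)})=\phi(\sfrac{i}{N})$, and on the other hand the Bernstein bound controls $|\Upsilon_{\bF^{(N)}}(\sfrac{i}{N})-\phi(\sfrac{i}{N})|$. Subtracting, we obtain
\[
\left|\Upsilon_{\bF^{(N)}}(p_i^{(N)})-\Upsilon_{\bF^{(N)}}(\sfrac{i}{N})\right|=\left|\phi(\sfrac{i}{N})-\Upsilon_{\bF^{(N)}}(\sfrac{i}{N})\right|\le\frac{9K_+}{4N^{1/2}}.
\]
It remains to invert this: since $\Upsilon_{\bF^{(N)}}$ is increasing, I would bound its variation from below by a lower bound on its derivative. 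Differentiating the Bernstein polynomial and using the standard identity $\Upsilon'_{\bF^{(N)}}(x)=N\sum_{i=0}^{N-1}(F_{i+1}-F_i)B_{i,N-1}(x)$ together with $F_{i+1}-F_i=\phi(\sfrac{i+1}{N})-\phi(\sfrac{i}{N})\ge K_-/N$ (from $\phi'\ge K_-$ and the mean value theorem) shows $\Upsilon'_{\bF^{(N)}}(x)\ge K_-$ for all $x\in[0,1]$.

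With the lower bound $\Upsilon'_{\bF^{(N)}}\ge K_-$ in hand, the mean value theorem applied to the left-hand side above gives $K_-\,|p_i^{(N)}-\sfrac{i}{N}|\le\sfrac{9K_+}{4N^{1/2}}$, which is exactly the claimed inequality after dividing by $K_-$. I expect the derivative lower bound to be the main technical point to get right: one must verify the differentiated-Bernstein identity and confirm that the per-step fixation gaps $F_{i+1}-F_i$ are bounded below by $K_-/N$ uniformly, so that the lower bound on $\Upsilon'_{\bF^{(N)}}$ holds on the whole interval rather than only at the sample points. Everything else is an assembly of results already established—the universality/invertibility from Theorem~\ref{thm:universalityWF} and the approximation estimate from Lemma~\ref{thm:bernstein_id}—so the proof is short once the derivative bound is secured.
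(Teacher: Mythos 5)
Your proof is correct, and it rests on the same two pillars as the paper's: the identity $\Upsilon_{\bF^{(N)}}(p_i^{(N)})=\phi(\sfrac{i}{N})$ and the Bernstein approximation bound of Lemma~\ref{thm:bernstein_id}. But you execute the final step differently, and in a way that costs you an extra technical lemma. The paper compares $\phi(\sfrac{i}{N})$ with $\phi(p_i^{(N)})$: it applies the mean value theorem to $\phi$ itself (whose derivative is bounded below by $K_-$ by hypothesis) and invokes the Bernstein bound at the point $p_i^{(N)}$, where $\Upsilon^N_\phi(p_i^{(N)})=\phi(\sfrac{i}{N})$ turns $\left|\Upsilon^N_\phi(p_i^{(N)})-\phi(p_i^{(N)})\right|$ directly into $\left|\phi(\sfrac{i}{N})-\phi(p_i^{(N)})\right|$. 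You instead compare $\Upsilon_{\bF^{(N)}}(p_i^{(N)})$ with $\Upsilon_{\bF^{(N)}}(\sfrac{i}{N})$, which forces you to prove the lower bound $\Upsilon'_{\bF^{(N)}}\geq K_-$ via the differentiated-Bernstein identity $\Upsilon'_{\bF^{(N)}}(x)=N\sum_{i=0}^{N-1}(F_{i+1}-F_i)B_{i,N-1}(x)$ together with $F_{i+1}-F_i\geq \sfrac{K_-}{N}$. That step is correct --- the identity is standard and the per-step gaps are indeed bounded below by the mean value theorem applied to $\phi$ --- so your argument goes through and yields the same constant. What the paper's route buys is economy: by putting the mean value theorem on $\phi$ rather than on its Bernstein approximant, it needs no information about $\Upsilon'_{\bF^{(N)}}$ at all, and the proof reduces to three lines. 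What your route buys, if anything, is the observation (useful elsewhere) that the Bernstein polynomial of an increasing fixation vector inherits a uniform positive lower bound on its derivative; but for this particular estimate it is a detour you could have avoided.
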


\begin{proof}
	On one hand, we have by the mean value theorem that
	\[
	|\phi(\sfrac{i}{N})-\phi(p_i^{(N)})|=|\phi'(\bar{z})|\left|\frac{i}{N}-p_i^{(N)}\right|\geq K_-\left|\frac{i}{N}-p_i^{(N)}\right|,
	\]
	for some $\bar{z}$ in the open interval delimited by $\sfrac{i}{N}$ and $p^{(N)}_i$.
	On the other hand, we also have by Theorem~\ref{thm:bernstein_id} that
	\[
	\left|\Upsilon_\phi^N(p_i^{(N)})-\phi(p_i^{(N)})\right|\leq \frac{9K_+}{4N^{1/2}}\ .
	\]
	Since $\Upsilon_\phi^N(p_i^{(N)})=\phi(\sfrac{i}{N})$, the result follows.
\end{proof}

\section{Kimura invariance and regularity in time-inhomogeneous processes}
\label{sec:inhomogeneous}

\subsection{A convex semigroup of evolutionary matrices}

Understanding evolution in fluctuating environments has always been an important issue, and the first studies seem to date back at least to the works of \citet{Kimura_1954} and \citet{Haldane_1963}. The earliest works dealing with this problem in finite populations seem to go back to the early 1970's \citep{gillespie1972,gillespie1973,Karlin_Lieberman,Karlin_Levikson}; see also the review in~\citet{Felsenstein:1976}, and the chapters in \citet{Karlin_Taylor_second} and \cite{gillespie1991causes}. Recently, it has been gaining importance again~\citep{Ashcroft:etal:2014,lorenzi2015dissecting}. Time dependent evolutionary processes lead naturally to the study of products of transition matrices, and how it might impact on the property that all states are accessible, in a finite number of steps,from any transient state, and on the regularity of such processes.

We are thus led to consider whether the class discussed in Section~\ref{ssec:natural_class} is closed under products. Since the product of irreducible or primitive matrices is not necessarily irreducible nor primitive, such a closure is not to be expected. Indeed, given two tridiagonal matrices such that their core have null diagonal, they are irreducible, but their product --- a pentadiagonal matrix with super- and sub-diagonal identically zero in the kernel --- is not.
 Tridiagonal matrices with null diagonal is a well-known device used for computation biologists to speed up  simulations for fixation; however, it will turn out from the discussion below that this device will usually not work for time-inhomogeneous processes. In order to deal with these difficulties, we begin by  restricting the Kimura class as follows:

\begin{definition}[The Gillespie class of matrices]\label{def:Gillespie}
We say that a matrix $\mathbf{A}$ is totally indecomposable if there are no permutation matrices $\mathbf{P}$ and $\mathbf{Q}$ such that $\mathbf{PAQ}=\left(\begin{smallmatrix}\mathbf{B}&\mathbf{0}\\ \mathbf{C}&\mathbf{D}\end{smallmatrix}\right)$, with $\mathbf{B}$, $\mathbf{D}$ non-trivial square matrices and $\mathbf{0}$ the null matrix.	We say that a Kimura transition matrix $\bM$ is a \emph{Gillespie} matrix if $\widetilde{\bM}$ is totally indecomposable. The Gillespie class will be denoted by $\G$.
\end{definition}

\begin{remark}
 We termed the matrices in definition~\ref{def:Gillespie} after John H. Gillespie who, as far as we know,  was the first to systematically study time-inhomogeneous evolutionary processes for finite populations using stochastic techniques, cf.~\citet{gillespie1972,gillespie1973}; see also~\citet{gillespie1991causes}.
\end{remark}

Totally indecomposable matrices can be also characterised as irreducible matrices that have positive diagonal \citep{lewin1971}. Such a characterisation immediately implies that the product and convex combinations of totally indecomposable matrix are again totally indecomposable, and this leads to the following result:
\begin{proposition}\label{prop:Kimura_convex}
	The class of Gillespie matrices is a convex set and it is closed by multiplication. In particular, it is a convex semigroup---cf. Remark~\ref{rmk:csg}.
\end{proposition}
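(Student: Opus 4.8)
The plan is to reduce the claim to the structural characterization of totally indecomposable matrices quoted just before the statement: a matrix is totally indecomposable if and only if it is irreducible and has a positive main diagonal \citep{lewin1971}. Given this, what must be checked is that the property ``the core $\widetilde{\bM}$ is totally indecomposable'' is preserved under products and under convex combinations of Kimura matrices, and that the block structure of Definition~\ref{def:Kimura} is itself preserved by these operations. The convexity of the Kimura class has already been noted in the excerpt, so the only genuinely new content is the behaviour of the core under multiplication.

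First I would verify that the set $\G$ is closed under convex combinations. If $\bM^{(1)},\bM^{(2)}\in\G$, then for $\mu\in[0,1]$ the matrix $\mu\bM^{(1)}+(1-\mu)\bM^{(2)}$ is again Kimura (the class $\K$ is convex, as observed after Definition~\ref{def:Kimura}), and its core is $\mu\widetilde{\bM}^{(1)}+(1-\mu)\widetilde{\bM}^{(2)}$. Using the irreducible-plus-positive-diagonal characterization, the convex combination of two matrices each having a positive diagonal again has a positive diagonal, and irreducibility of a sum of a non-negative matrix with an irreducible one is preserved; hence the combination is totally indecomposable, so it lies in $\G$.

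Next I would address closure under multiplication, which is the crux. If $\bM^{(1)},\bM^{(2)}\in\G$, a direct block computation using the form~\eqref{eq:Mdeff} shows that the product $\bM^{(1)}\bM^{(2)}$ again has two absorbing rows in positions $0$ and $N$ (the first and last rows of each factor are $\be_0$ and $\be_N$, and these are fixed by right multiplication), so the product retains the Kimura block shape, with core equal to $\widetilde{\bM}^{(1)}\widetilde{\bM}^{(2)}$. The key fact, which is exactly the characterization attributed to \citet{lewin1971} and flagged in the paragraph preceding the statement, is that the product of two totally indecomposable matrices is totally indecomposable; invoking it, the product's core is totally indecomposable, so the product lies in $\G$. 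One should also check that the off-diagonal vectors $\ba,\bb$ of the product remain non-negative and non-zero, and that the product core remains sub-stochastic, all of which follow because products of sub-stochastic matrices are sub-stochastic and the relevant entries are sums of non-negative terms.

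The main obstacle is establishing that total indecomposability is genuinely preserved under products, rather than merely irreducibility --- indeed the whole point of passing from $\K$ to $\G$ (see Remark~\ref{rmk:csg} and the pentadiagonal counterexample in the text) is that irreducibility alone fails to survive multiplication. Here the positive-diagonal half of the characterization does the real work: if $\widetilde{\bM}^{(1)}$ and $\widetilde{\bM}^{(2)}$ each have strictly positive diagonals, then every diagonal entry of the product dominates the corresponding product of diagonal entries and is therefore positive, and positivity of the diagonal together with irreducibility of the product (which follows since each factor's positive diagonal lets any path in one factor be lifted to a path in the product) yields total indecomposability. Thus the semigroup structure follows, and together with convexity and the compatibility of multiplication with convex combinations, $\G$ is a convex semigroup in the sense of Remark~\ref{rmk:csg}.
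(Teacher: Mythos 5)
Your proposal is correct and takes essentially the same route as the paper: the paper's entire proof is the remark, placed immediately before the proposition, that the characterisation of totally indecomposable matrices as irreducible matrices with positive diagonal (attributed to \citet{lewin1971}) makes closure under products and convex combinations immediate. You merely spell out the details the paper leaves implicit --- the block computation showing the core of a product of Kimura matrices is the product of the cores, and the diagonal-domination argument giving irreducibility and positivity of the diagonal of the product.
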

\begin{remark}  \label{rmk:convex}
	\
	\begin{enumerate}
	\item The idea behind the use of totally indecomposable products is the fact that there is no two subset of non homogeneous states, i.e., $I_1,I_2\subset\{1,\dots,N-1\}$, $I_1,I_2\ne\emptyset$ such that $I_2$ cannot be reached from any state in $I_1$. 
		\item Any totally indecomposable matrix is primitive, and in particular irreducible \citep{lewin1971}. Thus,  the Gillespie class is contained in the Kimura class, and all previous results apply. On the other hand, all examples of Kimura matrices studied so far are also Gillespie matrices.
 
  \item We do not address the question if $\G$ is the largest convex semigroup contained in $\K$. Notice, however, that $\G$ is not the largest set where non-homogeneous processes are well behaved, since for any $\bM\in\K$ we have $\bM\cdot\G\subset\K$. In particular, any product of Kimura matrices, with all but one factors being Gillespie is Kimura, but not necessarily Gillespie.
\end{enumerate}
\end{remark}

Within this section, all matrices will be assumed to belong to the Gillespie class, unless stated otherwise.

\subsection{Periodic environments}
\label{ssec:periodic}

One feature of periodic varying environments is that, in general,  the fixation probability depends not only on the initial frequency, but also on the current time state of the environment. Indeed, consider a periodic environment of period $l$, and let the corresponding transition matrices be $\bM_0,\ldots,\bM_{l-1}$. We extend the indices of the matrices for all integers, such that if $k\equiv k'\,\mathrm{mod}\, l$, then $\bM_k=\bM_{k'}$. Let
    \[
     \bP_k=\bM_k\bM_{k+1}\cdots\bM_{k+l-2}\bM_{k+l-1}\ ,
    \]
i.e., $\bP_k$ identifies the products of transition matrices after $l$ steps (one period) when the process starts with $\bM_k$. 
We also define $\bP_\infty\bydef\lim_{n\to\infty}\bP_0^n$.

In the following result, we prove that the joint fixation probability will not depend on the initial condition, if and only if the fixation probabilities associated to all instantaneous evolutions are the same.

	\begin{lemma}
	{W}e have that
	\begin{equation}
	\label{eqn:prod_per}
	\lim_{n\to\infty}\bP_k^n=\bP_\infty,\quad k=1,\ldots,l-1,
	\end{equation}
	if, and only if, the fixation vectors of $\bM_k$, $k=0,\ldots,l-1$ are the same.
	\end{lemma}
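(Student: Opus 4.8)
The plan is to exploit the characterisation of the fixation vector as the limiting behaviour of the transition matrix together with the algebraic structure of the period-$l$ products $\bP_k$. First I would observe that each $\bP_k$ is itself a Kimura matrix: being a product of Gillespie matrices, it is Gillespie by Proposition~\ref{prop:Kimura_convex}, hence Kimura, and therefore by Equation~\eqref{eq:limit} we have $\lim_{n\to\infty}\bP_k^n=(\bone-\bF^{(k)})\otimes\be_0+\bF^{(k)}\otimes\be_N$, where $\bF^{(k)}$ is the fixation vector of $\bP_k$. Thus the left-hand side of \eqref{eqn:prod_per} equalling $\bP_\infty$ for all $k$ is \emph{equivalent} to the statement that all the $\bP_k$ share the common fixation vector $\bF^{(0)}$. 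So the lemma reduces to showing that the $\bP_k$ have a common fixation vector if, and only if, the instantaneous matrices $\bM_0,\ldots,\bM_{l-1}$ all have the same fixation vector.

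The key algebraic fact I would use is the cyclic relation between consecutive period-products: by definition $\bP_{k+1}=\bM_k^{-1}\bP_k\bM_k$ when $\bM_k$ is invertible, but more robustly one has the intertwining $\bM_k\bP_{k+1}=\bP_k\bM_k$, which holds without any invertibility assumption simply by regrouping the product of $l$ consecutive factors. From this intertwining I would derive a relation between the fixation vectors $\bF^{(k)}$ and $\bF^{(k+1)}$. Recalling that the fixation vector is the right eigenvector of a Kimura matrix for eigenvalue one normalised with first entry $0$ and last entry $1$ (Proposition~\ref{prop:fixationvector}), applying $\bM_k\bP_{k+1}=\bP_k\bM_k$ to the eigenvector $(\bF^{(k+1)})^\dagger$ gives $\bP_k\bigl(\bM_k(\bF^{(k+1)})^\dagger\bigr)=\bM_k(\bF^{(k+1)})^\dagger$, so $\bM_k(\bF^{(k+1)})^\dagger$ is a right eigenvector of $\bP_k$ for eigenvalue one. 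Since that eigenspace is one-dimensional and spanned by $(\bF^{(k)})^\dagger$, and since $\bM_k$ preserves the first and last coordinates (the absorbing structure of a Kimura matrix), the normalisation forces $\bM_k(\bF^{(k+1)})^\dagger=(\bF^{(k)})^\dagger$.

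With this transfer relation in hand, both directions follow. For the ``if'' direction, suppose all $\bM_k$ have a common fixation vector $\bF$; then $\bM_k\bF^\dagger=\bF^\dagger$ for every $k$, so $\bP_k\bF^\dagger=\bF^\dagger$ for every $k$, whence $\bF^{(k)}=\bF$ for all $k$ by uniqueness, and \eqref{eqn:prod_per} holds. For the ``only if'' direction, suppose \eqref{eqn:prod_per} holds, so all $\bF^{(k)}$ coincide, say equal to some common $\bF$; feeding $\bF^{(k)}=\bF^{(k+1)}=\bF$ into the transfer relation $\bM_k(\bF^{(k+1)})^\dagger=(\bF^{(k)})^\dagger$ yields $\bM_k\bF^\dagger=\bF^\dagger$ for each $k$, i.e.\ $\bF$ is the fixation vector of every instantaneous matrix $\bM_k$, as desired.

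The main obstacle I anticipate is justifying the step $\bM_k(\bF^{(k+1)})^\dagger=(\bF^{(k)})^\dagger$ rigorously, specifically checking that $\bM_k(\bF^{(k+1)})^\dagger$ has the correct normalisation (first entry $0$, last entry $1$) so that uniqueness in Proposition~\ref{prop:fixationvector} can be invoked; this uses precisely that the top row of $\bM_k$ is $(1,\bzero,0)$ and the bottom row is $(0,\bzero,1)$, together with $F^{(k+1)}_0=0$ and $F^{(k+1)}_N=1$. One must also confirm that the eigenvalue-one eigenspace of each $\bP_k$ is genuinely one-dimensional, which is inherited from the fact that $\bP_k$ is Kimura and so has exactly two absorbing states; everything else is then a clean eigenvector-uniqueness argument.
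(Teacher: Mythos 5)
Your proposal is correct and follows essentially the same route as the paper: both directions hinge on the cyclic intertwining $\bM_k\bP_{k+1}=\bP_k\bM_k$, the ``if'' direction is identical, and for ``only if'' the paper passes to the limit to get $\bP_\infty\bM_k=\bM_k\bP_\infty$ and reads off $\bM_k\bF^\dagger=\bF^\dagger$ from the block structure, whereas you extract the same conclusion at the finite level via the eigenvector transfer $\bM_k(\bF^{(k+1)})^\dagger=(\bF^{(k)})^\dagger$ --- a cosmetic difference. One slip worth fixing: the eigenvalue-one right eigenspace of a Kimura matrix $\bP_k$ is \emph{two}-dimensional (spanned by $\bone$ and $(\bF^{(k)})^\dagger$, since $(\bI-\widetilde{\bM})$ invertible means the eigenvector is freely parametrised by its first and last entries), not one-dimensional as you claim; your argument survives because the normalisation you verify (first entry $0$, last entry $1$) is exactly what Proposition~\ref{prop:fixationvector} needs to pin the vector down uniquely, so you should drop the one-dimensionality claim and rely on that proposition alone.
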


	\begin{proof}
		First, we recall that if $\bA$ and $\bB$ are two Gillespie matrices, then we have that $\lim_{k\to\infty}\bA^k=\lim_{k\to\infty}\bB^k$ if, and only if, they have the same fixation probability. Note also that due to the semigroup property, the fact that all $\bM_k$ matrices are Gillespie implies that all $\bP_k$ are Gillespie.
		
		$\Leftarrow$ Let $\bF$ be such that $\bM_k\bF^{\dagger}=\bF^{\dagger}$ for all $k$, with $F_0=1-F_N=0$. It is immediate that $\bP_k\bF^{\dagger}=\bF^{\dagger}$ for all $k$, and hence  Equation~\eqref{eqn:prod_per} holds.

		$\Rightarrow$ First, we notice that $\bP_k\bM_{k+l}=\bM_k\bP_{k+1}$, and therefore $\bP_{k}^n\bM_{k}=\bP_k^n\bM_{k+l}=\bM_k\bP_{k+1}^n$. Taking $n\to\infty$ and using Equation~\eqref{eqn:prod_per} yields
	\begin{equation}
	\label{eqn:per_prod_com}
	\bP_\infty\bM_k=\bM_k\bP_\infty,\quad k=0,\ldots,l-1.
	\end{equation}
	Let
 	\[
	\bM_k=\begin{pmatrix}
	1&\bzero&0\\
	\ba_k^{\dagger}&\widetilde{\bM}_k&\bb_k^{\dagger}\\
	0&\bzero&1
	\end{pmatrix}
	\quad\text{and}\quad
	\bP_\infty=\begin{pmatrix}
	1&\bzero&0\\
	(\mathbf{1}-\widetilde{\bF})^{\dagger}&\bzero&\widetilde{\bF}^{\dagger}\\
	0&\bzero&1
	\end{pmatrix},
	\]
	with $\bF=(0,\widetilde{\bF},1)$ the fixation vector of all $\bP_k$.
	Together with the identity in Equation~\eqref{eqn:per_prod_com} we find
	\[
\widetilde{\bM}_k\widetilde{\bF}^{\dagger}+\bb_k^{\dagger}=\widetilde{\bF}^{\dagger}
	\] 
	for all $k$, and hence that $\bF$ is the fixation vector of $\bM_k$.		
	\end{proof}

	There is still one caveat before finishing this subsection.  The fact that $\G$ is a convex semigroup   implies  that for any  finite sequence $\bM_0,\ldots,\bM_{l-1}\in\G$, we have that  $\bP\bydef\bM_0\cdots\bM_{l-1} \in\G$, and that  $\bP^n\to \bP^\infty$, when $n\to\infty$. Nothing is said about the convergence of the product
	  \begin{equation}\label{eq:Mprod}
	  \prod_{k=0}^\infty\bM_k.
	  \end{equation}

	  In order to investigate such convergence, first we observe that all the partial products are of the form
	  \[
	  \dots,\,\bP_0^n,\,\bP_0^n\bM_{0},\,\bP_0^n\bM_{0}\bM_{1},\ldots,\bP_0^n\bM_{0}\cdots\bM_{l-2},\,\bP_0^{n+1},\ldots.
	  \]
	  A necessary and sufficient condition for convergence of the product is that all terms in the above equation converge to the same limit, i.e. the infinite product converges if, and only if, the following equalities are satisfied
	  \[
	  \bP_{\infty}=\bP_\infty\bM_0=\bP_\infty\bM_0\bM_1=\cdots=\bP_\infty\bM_0\cdots\bM_{l-2}.
	  \]
 	  But since $\bP_\infty\bM_k=\bP_\infty$, these equalities are all satisfied,  and  convergence follows.
	  
	 Notice, however, that this argument does not work for non-periodic products of the matrices $\bM_k$ --- in particular, it does not apply for random products. This will be discussed in Subsection~\ref{ssec:aperiodic}.

\subsection{Regular and non-regular evolution}

We now address the following question: assume that a time inhomogeneous process is stepwise regular. Is such a process itself regular?  

In order to answer this question, we begin with the following result

\begin{corollary}\label{cor:semigroup_regular}
The intersection of the set of banded stochastically ordered matrices with the set of regular Gillespie matrices is a convex semigroup.
Furthermore, let $\mathcal{R}$ be one of the following set of matrices:
 \begin{enumerate}
  \item WF matrices with increasing $\bp$ (or, equivalently, regular WF matrices).
  \item M matrices with increasing $\bp$.
  \item M matrices with $\bp\in (\epsilon_N,1-\epsilon_N),\quad \epsilon_N=\sfrac{1}{(N+1)}$.
  \item The union of any two of the previous sets or of all three.
 \end{enumerate}
Then the set generated by convex combinations and finite products of elements of $\mathcal{R}$ is a convex sub-semigroup of regular Gillespie matrices.
\end{corollary}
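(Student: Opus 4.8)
The plan is to reduce the whole statement to three structural facts already in hand: that $\BSO_{N+1}$ and $\G$ are each convex semigroups, that every Kimura matrix lying in $\BSO_{N+1}$ is regular, and that each of the listed families $\mathcal{R}$ is contained in $\BSO_{N+1}\cap\G$. Once these are established, the ``generated'' object is automatically the smallest convex semigroup containing $\mathcal{R}$, and it is trapped inside a convex semigroup all of whose members are regular Gillespie matrices.

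For the first assertion I would argue as follows. By Lemma~\ref{lem:stocordered}, $\BSO_{N+1}$ is a convex semigroup, and by Proposition~\ref{prop:Kimura_convex} so is $\G$. The intersection of two convex semigroups is again a convex semigroup: if $\bA,\bB$ lie in both sets, then $\bA\bB$ and every convex combination $\mu\bA+(1-\mu)\bB$ lie in each set separately, hence in the intersection, while the intersection of two convex sets is convex. Finally, the Gillespie class is contained in the Kimura class (Remark~\ref{rmk:convex}), so the corollary asserting that every $\BSO_{N+1}$ matrix is regular applies to each element of $\BSO_{N+1}\cap\G$. Consequently $\BSO_{N+1}\cap\G$ coincides with the intersection of $\BSO_{N+1}$ with the set of regular Gillespie matrices, and this intersection is a convex semigroup.

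For the second assertion the crucial step is to check that each listed $\mathcal{R}$ sits inside $\BSO_{N+1}\cap\G$. Regular WF matrices are, by Theorem~\ref{thm:WFregular_sse}, exactly the strictly stochastically ordered ones, and $\SSO_{N+1}\subset\BSO_{N+1}$ by Lemma~\ref{lem:stocordered}; their cores are positive, hence totally indecomposable, so they are Gillespie. Moran matrices with increasing $\bp$ are BSO by item (1) of Lemma~\ref{lem:MoranBSO}, and those with $\bp\in(\epsilon_N,1-\epsilon_N)$ are BSO by item (2); in both cases $0<p_i<1$ forces the tridiagonal core to be irreducible with positive diagonal, hence totally indecomposable, so these too are Gillespie. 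Unions of such families remain inside $\BSO_{N+1}\cap\G$. Since $\BSO_{N+1}\cap\G$ is a convex semigroup containing $\mathcal{R}$, the set generated from $\mathcal{R}$ by convex combinations and finite products is the smallest convex semigroup containing $\mathcal{R}$; it is therefore contained in $\BSO_{N+1}\cap\G$ and is itself a convex sub-semigroup of regular Gillespie matrices.

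The main obstacle is not analytic but one of bookkeeping: one must verify the membership claims precisely, namely that the relevant cores are genuinely totally indecomposable (so the families really are Gillespie and not merely Kimura) and that regular WF matrices coincide with the SSO ones, which is exactly what licenses the inclusion $\SSO_{N+1}\subset\BSO_{N+1}$. Once these inclusions are secured, the closure argument is purely structural, since the genuinely hard content --- that $\BSO_{N+1}$ and $\G$ are closed under products --- is already supplied by Lemma~\ref{lem:stocordered} and Proposition~\ref{prop:Kimura_convex}.
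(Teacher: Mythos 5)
Your proposal is correct and follows essentially the same route as the paper, whose proof is a one-line citation of Lemma~\ref{lem:stocordered}, Proposition~\ref{prop:Kimura_convex} and the regularity of banded stochastically ordered Kimura matrices; you have merely written out the verifications the authors leave implicit (the inclusions $\mathcal{R}\subset\BSO_{N+1}\cap\G$ via Theorem~\ref{thm:WFregular_sse} and Lemma~\ref{lem:MoranBSO}, and the total indecomposability of the cores). No gaps.
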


\begin{proof}
 The result follows from Lemma~\ref{lem:stocordered}, Proposition~\ref{prop:Kimura_convex} and Theorem~\ref{thm:rc}.
\end{proof}

\begin{example}
As an example of the previous corolary, we consider a Moran process in which \A dominates \B, with (TSP) given by $p_i=\left(\frac{i}{N}\right)^{1+\sfrac{1}{N}}$ and a Wright-Fisher process with a small dominance of \B over \A, with TSP given by $p'_i=\left(\frac{i}{N}\right)^{1-\sfrac{1}{N^2}}$. We immediately know that the joint process is regular. Indeed, for $N=10$, $\bF\approx
(0,                                    0.103,
                                    0.204,
                                    0.305,
                                    0.405,
                                    0.504,
                                    0.604,
                                    0.703,
                                    0.802,
                                    0.901,1)$, which shows a slight dominance from \A over \B. On the other hand, consider two regular WF processes, in the first the relative fitness is constant and equal to 1.7, while in the second case we have a frequency dependent relative fitness given by $i/N$. Note that, again we have that \A dominates \B in the first case and the reverse in the second process. Both TSP are increasing, and so it is in the case of the product process. However (again for $N=10$), the product process has a fixation probability typical from a coordination game (initially below neutral, eventually above). See Fig.~\ref{fig:fix_prod}.
\end{example}

\begin{figure}
\centering \includegraphics[width=0.45\textwidth]{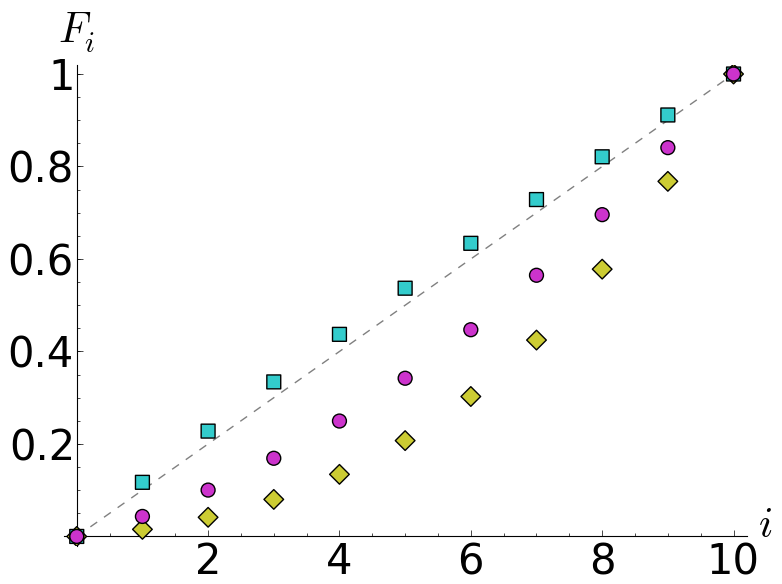}
 \includegraphics[width=0.45\textwidth]{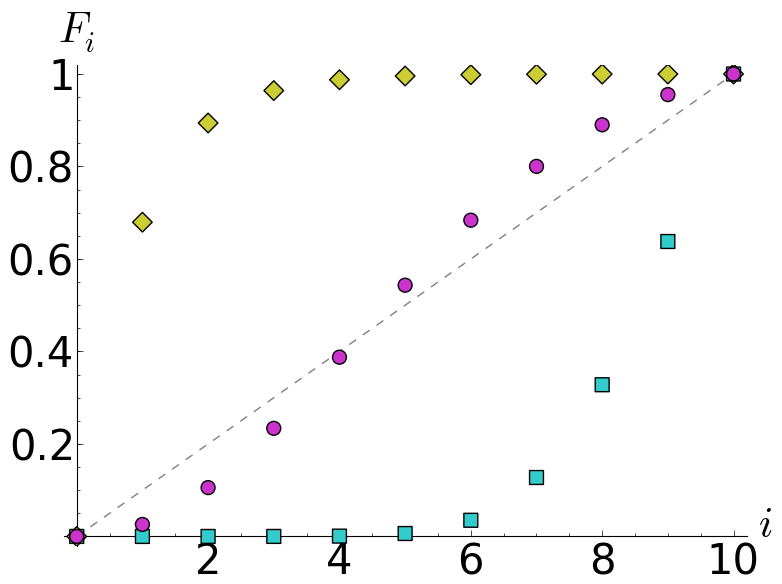}
 \caption{Fixation probability $F_i$ of the Wright-Fisher processes with respect to the initial condition $i=0,\dots,10$, for type selection probabilities $p_i=\left(\frac{i}{N}\right)^{1+\sfrac{1}{N}}$ and $p'_i=\left(\frac{i}{N}\right)^{1-\sfrac{1}{N^2}}$(left) and for relative fitnesses $\rho_i=1.7$, $\rho_i'=\sfrac{i}{N}$. In both cases the yellow diamond marks the fixation probability of matrix $\bM_1$, the cyan square for the matrix $\bM_2$ and the pink circle for the product matrix $\bM_1\bM_2$.}
 \label{fig:fix_prod}
\end{figure}

\begin{remark}
Notice that Theorem~\ref{thm:WFregular_sse} and Corollary~\ref{cor:semigroup_regular} show that a non-homogeneous WF  process, i.e., a process instantaneously given by a WF matrix, that is regular at  every instant   is itself regular. In particular, time inhomogeneity alone cannot induce non-regular fixation patterns --- cf.~\citet{gillespie1991causes}.
\end{remark}

For the Moran, and for tri-diagonal processes in general, multiplication outside of the class identified in Corollary~\ref{cor:semigroup_regular} can produce non-regular matrices, as shown in the next example.
\begin{example}
\label{examp:Parrondo}
Let
\[
\bp_1=\begin{pmatrix}0,\frac{1}{7},\frac{6}{7},1\end{pmatrix}
\quad\text{and}\quad
\bp_2=\begin{pmatrix}0,\frac{6}{7},\frac{1}{7},1\end{pmatrix}
\]
Then the corresponding Moran matrices are given by
\[
\bM_1=
\begin{pmatrix}
1 & 0 & 0 & 0 \\
\frac{2}{7} & \frac{13}{21} & \frac{2}{21} & 0 \\
0 & \frac{2}{21} & \frac{13}{21} & \frac{2}{7} \\
0 & 0 & 0 & 1 \\
\end{pmatrix}
\text{  and  }
\bM_2=
\begin{pmatrix}
1 & 0 & 0 & 0 \\
\frac{1}{21} & \frac{8}{21} & \frac{4}{7} & 0 \\
0 & \frac{4}{7} & \frac{8}{21} & \frac{1}{21} \\
0 & 0 & 0 & 1 \\
\end{pmatrix}.
\]
Let
\[
\bM_3=\bM_1\bM_2=
\begin{pmatrix}
1 & 0 & 0 & 0 \\
\frac{23}{147} & \frac{128}{441} & \frac{172}{441} & \frac{8}{49} \\
\frac{8}{49} & \frac{172}{441} & \frac{128}{441} & \frac{23}{147} \\
0 & 0 & 0 & 1 \\
\end{pmatrix},
\]
and let $\bF_i$, $i=1,2,3$,  be the corresponding fixation vectors. It is easy, though tedious, to check that
\begin{align*}
\bF_1&=\begin{pmatrix}0,\frac{1}{5},\frac{4}{5},1\end{pmatrix}^{\dagger}\\
\bF_2&=\begin{pmatrix}0,\frac{12}{25},\frac{13}{25},1\end{pmatrix}^{\dagger}\\
\bF_3&=\begin{pmatrix}0, \frac{244}{485}, \frac{241}{485}, 1\end{pmatrix}^{\dagger}.
\end{align*}
Hence $\bM_3$ is not regular despite the fact that $\bM_1$ and $\bM_2$ are regular. 
\end{example}

Example~\ref{examp:Parrondo} can be generalised straightforwardly  any even dimensions (and also adapted to odd ones). Indeed, assume $N+1$ is even let $\bp_1=(0,\sfrac{1}{k},\sfrac{(k-1)}{k},\ldots,\sfrac{1}{k},\sfrac{(k-1)}{k},1)$ and $\bp_2=(0,\sfrac{(k-1)}{k},\sfrac{1}{k},\ldots,\sfrac{(k-1)}{k},\sfrac{1}{k},1)$. Then for sufficient large $k$ the product of $\bM_1$ and $\bM_2$ is non-regular.

		The above example  yields two different deterministic versions in evolutionary dynamics of Parrondo's paradox in economy \citep{Parrondo_1,Parrondo_2}. In this sense, if evolution is described by $\bM_1$ in summer, and by $\bM_2$ in winter, then we obtain the following conclusions:
	\begin{enumerate}
		\item In an environment that is always summer or winter, the fixation probability of type \A with two individuals is larger than with just one individual. However, in the switching case the reverse holds. 
See~\citet{Osipovich} for a similar finding in biochemistry and~\citet{WilliamsHastings_AmNat2013} for a Parrondo paradox in ecology; in this second case there are two patches unable to sustain a certain populations; however if migration is allowed in alternatively between the two patches, the population may persist. See also~\citet{PeacockLopez20113124} for a direct example in which a Parrondo's game is directly related to seasonality. 

		\item The probability of fixation when there is only one individual in the switching case is larger than in every state of the non-switching case, while when there are two individuals it is smaller. This is the same ``surprising effect'' presented in \citet{Ashcroft:etal:2014}, however with deterministic time evolution, i.e, no stochasticity was assumed in the time evolution of the model. See also \citet{Melbinger:Vergassola:2015,Yakushina}.
	\end{enumerate}

\begin{remark}\label{rmk:barbosa}
Let us define a matrix $\mathbf{N}$ such that $N_{ii}=0$ and $N_{ij}=\frac{M_{ij}}{\sum_{k\ne i}M_{ik}}$ for $i\ne j$ and let $\bF$ be the fixation probability associated to $\bM$. It is immediate to prove that $\bF$ is such that $\mathbf{N}\bF=\bF$. In effect
\[
 \sum_jN_{ij}F_j=\frac{\sum_{j\ne i}M_{ij}F_j}{\sum_{k\ne i}M_{ik}}=\frac{F_i-M_{ii}F_i}{1-M_{ii}}=F_i\ .
\]
This observation has been used to speed up numerical computations of the vector $\bF$. See, e.g.~\citep{Barbosa_PhysRevE}. 

On the other hand, given two Moran matrices $\bM_1$ and $\bM_2$, with fixation vectors $\bF_1$ and $\bF_2$, one obtains matrices $\bN_1$ and $\bN_2$, using the procedure described above, that have the same fixation vectors. Nevertheless, the fixation vectors of $\bM_1\bM_2$ and $\bN_1\bN_2$ will be different, unless in very special cases. 
 \end{remark}

\subsection{Aperiodic evolution}\label{ssec:aperiodic}

As observed in Subsection~\ref{ssec:periodic}, the convergence of non-periodic products is not guaranteed by the semigroup property, despite the fact that all finite products belong to $\G$.  However, we will now show that the  results in \citet{DL:1992} (see also \citep{Bru:etal:1994}) can be applied in our present context. Following \citet{DL:1992}, we say that a set $\mathcal{S}=\{\bM_0,\ldots,\bM_{l-1}\}$ is a  Right Convergent Product (\RCP) set if for any sequence of integers  $\{d_i\}_{i=1}^\infty$ with $0\leq d_i\leq l-1$ we have that the right product
\[
\prod_{i=1}^\infty\bM_{d_i}=\bM_{d_1}\bM_{d_2}\cdots\bM_{d_n}\cdots
\]

is well defined.  Let us also write
\[
\bT^{-1}\bM_k\bT=\begin{pmatrix}
1&\bzero\\
\bzero^\dagger&\bL_k
\end{pmatrix},
\]
and therefore the set $\mathcal{S}$ may be defined by the $(N+1)\times(N+1)$ matrices $\bM_k$ or by the $N\times N$ matrices $\bL_k$.

We now show the following result: 
\begin{lemma}
	\label{lem:aper_prod}
Let $\mathcal{S}$ be a finite set with  $\mathcal{S}\subset \SO_{N+1}$. Assume  that there exists an integer $m$, such that for all $k\geq m$ we have that all products $\bL_{d_1}\bL_{d_2}\cdots\bL_{d_k}$ have a positive row. Then $\mathcal{S}$ is an \RCP\ set.  
\end{lemma}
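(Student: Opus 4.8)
The plan is to conjugate everything by $\bT$ and thereby reduce the convergence of the matrix products $\bM_{d_1}\cdots\bM_{d_n}$ to the convergence of the products of associated matrices. Writing $\bT^{-1}\bM_k\bT=\left(\begin{smallmatrix}1&\bzero\\\bzero^\dagger&\bL_k\end{smallmatrix}\right)$ as in Subsection~\ref{ssec:regular_equiv}, the multiplicativity of this block form gives $\bT^{-1}(\bM_{d_1}\cdots\bM_{d_n})\bT=\left(\begin{smallmatrix}1&\bzero\\\bzero^\dagger&\bP_n\end{smallmatrix}\right)$ with $\bP_n\bydef\bL_{d_1}\cdots\bL_{d_n}$; since $\bT$ is fixed and invertible, the right product over $\mathcal S$ converges if and only if $\bP_n$ converges for every index sequence. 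Two features of the $\bL_k$ will be used: they are non-negative, because $\mathcal S\subset\SO_{N+1}$ and by the characterisation in Lemma~\ref{lem:stocordered} this makes $\bT^{-1}\bM_k\bT$ non-negative; and each has all column sums equal to one, by Remark~\ref{rmk:row_column_L}, which identifies $\bone$ as a left eigenvector of $\bL_k$ for the eigenvalue one. Hence every $\bL_k$, and therefore every $\bP_n$, is a non-negative matrix whose columns are probability vectors in $\simplex^{N-1}$.

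The argument then rests on a contraction with two ingredients. First, because $\bP_{n+1}=\bP_n\bL_{d_{n+1}}$ and $\bL_{d_{n+1}}$ has non-negative entries with unit column sums, each column of $\bP_{n+1}$ is a convex combination of the columns of $\bP_n$; thus the convex hulls $K_n$ of the column sets form a nested decreasing sequence of compact convex subsets of the simplex. Second, I would control the diameter of $K_n$ through the Dobrushin coefficient of the transpose: for a row-stochastic matrix $\bA$ set $\tau(\bA)=1-\min_{i,i'}\sum_j\min(A_{ij},A_{i'j})$, so that $\tau$ is sub-multiplicative and $\tfrac12\,\mathrm{diam}(K_n)=\tau(\bP_n^\dagger)$. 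If a length-$m$ product $\bL_{e_1}\cdots\bL_{e_m}$ has a row whose entries are all at least some $\beta>0$, then its transpose has a column bounded below by $\beta$, whence $\tau\bigl((\bL_{e_1}\cdots\bL_{e_m})^\dagger\bigr)\le 1-\beta$.

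The crucial step is to upgrade the mere existence of positive rows into a uniform contraction, and this is where finiteness of $\mathcal S$ enters. There are at most $l^m$ products of length exactly $m$; by hypothesis each of them has a positive row, and minimising over this finite family the smallest entry of a best row yields a single $\beta>0$ valid for all of them. Decomposing an arbitrary product $\bP_n$ with $n=qm+r$ into $q$ consecutive blocks of length $m$ (plus a shorter tail), each block transposes to a stochastic matrix of Dobrushin coefficient at most $1-\beta$, so sub-multiplicativity gives $\mathrm{diam}(K_n)\le 2(1-\beta)^{\lfloor n/m\rfloor}\to 0$. A nested sequence of compact sets with vanishing diameter shrinks to a single point $\bw^\dagger$; since every column of $\bP_n$ lies in $K_n$, all columns converge to $\bw^\dagger$, so $\bP_n\to\bw^\dagger\bone$, the infinite right product converges, and therefore $\mathcal S$ is an \RCP\ set.

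I expect the uniformity step to be the main obstacle: a positive row only guarantees $\tau<1$ for that particular product, and without a lower bound on the positive entries the blockwise contraction could degenerate, with $\tau\to1$. Finiteness of $\mathcal S$, hence of the set of length-$m$ products, is exactly what supplies the uniform $\beta$, and it is the same mechanism underlying the \RCP\ criteria of \citet{DL:1992}; one could alternatively phrase the conclusion as a direct verification of their hypotheses.
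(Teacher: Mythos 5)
Your argument is correct, but it takes a genuinely different route from the paper. The paper's proof of this lemma is essentially a two-line citation: it observes that the $\bL_k$ are column-stochastic matrices (Remark~\ref{rmk:row_column_L}) and that the positive-row hypothesis places the family under condition (C4) of Theorem~6.1 of \citet{DL:1992}, from which the \RCP\ property follows. You instead prove the convergence from scratch: after the same reduction by conjugation with $\bT$, you run a Dobrushin-coefficient contraction on the transposed partial products, and you make explicit the uniformity step that the citation hides --- finiteness of $\mathcal{S}$ bounds the number of length-$m$ products by $l^{m}$, so a single $\beta>0$ bounds below the entries of a positive row of each such product, giving a uniform contraction factor $1-\beta$ per block of length $m$. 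The individual steps all check out: non-negativity of $\bL_k$ follows from $\mathcal{S}\subset\SO_{N+1}$ via the $\bT$-conjugation criterion in Lemma~\ref{lem:stocordered}, the unit column sums come from Remark~\ref{rmk:row_column_L}, each column of a partial product is a convex combination of the columns of its predecessor, and sub-multiplicativity of the Dobrushin coefficient yields $\mathrm{diam}(K_n)\le 2(1-\beta)^{\lfloor n/m\rfloor}\to 0$, forcing the columns to collapse to a single limit vector. What the paper's approach buys is brevity and access to the full Daubechies--Lagarias machinery (in particular the continuity of the limit function exploited in the remark that follows); what yours buys is a self-contained, elementary argument that isolates exactly where each hypothesis (finiteness, stochastic ordering, positive rows) enters --- and, as you note, it only uses the positive-row hypothesis at $k=m$.
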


\begin{proof}
	Under the assumptions, the matrices $\bL_k$ are column stochastic matrices with a single eigenvalue $\lambda=1$, cf. Remark~\ref{rmk:row_column_L}. The existence of the integer $m$ implies  in condition (C4)  of Theorem 6.1 of \citep{DL:1992}, from which the result follows.
\end{proof}
As a special case of the Lemma~\ref{lem:aper_prod}, we observe that if $\mathcal{S}\subset \BSO$ then every product $\prod_{i=1}^n\bM_{d_i}$ is positive, for $n>N$, and hence the set is \RCP. Thus, the same condition that guarantees regularity also ensures that arbitrary products drawn from a finite set exists.

From a more general perspective, if the set $\mathcal{S}$ is infinite, and if the factors are drawn following an stationary ergodic stochastic process then, since  all matrices are stochastic, their product converges almost surely---cf. \citep{Hennion:1997}. Naturally, if $\mathcal{S}\subset\G$, then the limit will be in $\G$.  Notice that the conditions required for \RCP\ here are more restrictive, however  the conclusions are stronger in the sense that every infinite product is definite and not just almost all. Notice also, that under the hypothesis we have that the  limit function, as defined in \citet{DL:1992}, is continuous.

\begin{remark}[Limit function]
	As observed above, under the assumptions of Lemma~\ref{lem:aper_prod}, if  $|\mathcal{S}|=l$, and if we write $\mathcal{S}_l$ to denote the set of sequences $\fd=\{d_k\}_{k=1}^\infty$, with $d_k\in\{0,1,\ldots,l-1\}$, endowed with the metric $D(\fd,\fd)\bydef l^{-r}$, where $r$ is the first index such that $d_r\ne d_r'$. Then there exists a continuous matrix function $\bM_\infty:\mathcal{S}_l\to M_{N+1}(\R)$ describing all the possible products with elements from $\mathcal{S}$. Namely, we have that
	\[
	\prod_{k=1}^\infty \bM_{d_k}=\bM_\infty(\fd)
	=
\begin{pmatrix}
	1&\bzero& 0\\
	1-F_1(\fd)&\quad\bzero\quad&F_1(\fd)\\
	1-F_2(\fd)&\bzero&F_2(\fd)\\
	\vdots&\vdots&\vdots\\
	0&\bzero&1
	\end{pmatrix}			
	\]
	In the present context, this implies that the functions $F_i:\mathcal{S}_l\to[0,1]$ are uniformly continuous in the topology induced in the space of sequences by the metric $D$, i.e., given $\epsilon>0$ there exists $\delta>0$ such that
	$|F_i(\fd)-F_i(\fd')|<\epsilon$, whenever $D(\fd,\fd')<\delta$.  This representation implies two results: (i) that the fixation probability converge uniformly along the product; (ii) after a sufficient large, but finite, number of steps, the fixation probability is known with large precision. In particular, one can compute the fixation probability of an arbitrary infinite product (random or not) of matrices drawn from an \RCP\ set using a finite and potentially  small sub-product.
	\end{remark}

\begin{remark}[Mixtures vs random products]\label{rmk:mixtures}
	Given a finite set $\mathcal{S}=\{\bM_1,\ldots,\bM_l\}\subset\G$, the convex combination of elements of $\mathcal{S}$  can be interpreted as a mixture of  matrices in $\mathcal{S}$. While mixtures enjoy a long tradition in statistics and other areas, their applicability in evolutionary dynamics seems to be first pointed out by \citet{Der:etal:2012}. Notice, however, that the stochastic processes associated to the transition matrices are somewhat more limited under the mixture approach than under the random matrix product approach.  As an example, take $l=2$, and let $X_k$, $k\in\N$ be i.i.d, with values $\bM_1$ or $\bM_2$, and binomial distribution of parameter $p$.Then 
	\[
	\E[X_k]=\E[X_1]=p\bM_1+(1-p)\bM_2.
	\] 
	Also, let $X^n=\prod_{k=1}^nX_k$. Then, because of independence, we have
	\[
	\E[X^n]=\E[X_1]^n
	\]
	Thus a mixture replaces the stochastic process for the transition matrices by its expected value under the assumption of independence, while a random product allows more generic behaviour.
\end{remark}

\section{Discussion}
\label{sec:discussion}

The contribution of this work goes along three main lines: (i) an axiomatisation of the algebraic properties of evolutionary processes; (ii) a qualitative study of fixation in finite populations, including the identification and characterisation of regularity; (iii) the study of how to compose basic processes in order to model environmental modifications, i.e., the  rigorous construction of time-inhomogeneous evolutionary processes. In this final section,  an unified view  of these three components is presented, showing how they fit into the general framework.

Indeed, revisiting the typical basic setup allowed us to introduce the parametrisation through the so-called type selection probabilities. They have the advantage of being directly accessible, and hence are particularly helpful for understanding basic principles of the models --- as some of the results obtained here already indicate. 

Along the second line, we  linked regular Wright-Fisher (WF) process and the newly introduced vector of type selection probabilities. Furthermore, we also showed the existence of non-regular WF processes.  WF process are associated to \emph{micro-evolution}, i.e., step-by-step evolution, while large jumps are possible, but rare~\citep{Charlesworth}. However, for certain choices of the frequency-dependent fitnesses functions, if a population has evolved to a monomorphic configuration (i.e., type \A, say, has fixed) it will be more likely that such a fixation occurred through a large jump from an intermediate step, than after a long and continuous process, where $x$, the fraction of type \A individuals in the population, on average, steadily increases towards 1 from smaller values. As an example, let us consider the case where environmental conditions have not changed (i.e., fitnesses functions are the same) since $t=0$, when the population was mixed, and at a later time we find a population in the  state $x=1$. What was the most probable state of the population at time $t=0$? On the absence of further intermediate measurements, and on using a maximum likelihood estimator, then the most probable state is given by $i^*=\mathop{\mathrm{arg\,max}}_{i\in\{1,\dots,N-1\}}F_i$. Assuming that all mixed states are equally probable at $t=0$, an application of Bayes' Theorem will give the same answer. Indeed, if $F_i$ is initially increasing, then it drops close to zero for larger values of $i$, and  it eventually increases  until $F_N=1$ only near $x=1$, then it is clear that fixation of type \A is possible, but it is more likely to happen if it avoids larger values of $i$ --- or, in simpler words, if it \emph{jumps} from intermediate values of $i$, when both types are present in comparable amounts, straight into fixation, i.e., with $i=N$.

According to Theorem~\ref{thm:WFregular_sse}, large jumps in the WF process will be possible only if fitnesses functions are not affine in $x$. Therefore, within the WF framework, truly multiplayer games might have much more complex dynamics: we will show elsewhere that any fixation pattern, and therefore any relative fitness, can be well approximated by pay-offs from $d$-player game theory, provided $d$ is large enough. Affine functions correspond to 2-player games; therefore, discontinuities in evolution (jumps) are associated to interactions in the population involving necessarily more than 2 individuals at the same time and that cannot be reduced to a series of pairwise interactions. One possible example is the evolution of the language~\citep{Atkinson01022008}. As human evolution is regulated by complex social interaction~\citep{Mathew:2015}, we may expect frequent discontinuities in evolutionary traits, specially if more types (i.e., pure strategies in a game) are allowed~\citep{GokhaleTraulsen2010}.

The topic of small \emph{versus} large changes in evolution (or, in other words, the compatibility between micro- and macro-evolution) is an import one. This is the traditional dichotomy between the gradualist view and the punctuational  view of evolution. See, for example, the discussion in~\citet{Frazzetta,Charlesworth_Lande_Slatkin} and references therein. 
As explained in \citet{Frazzetta}  ``large steps in evolution are more infrequent than small ones (...). But that fact alone cannot be used to dismiss large-step change.'' More precisely 22\% of  substitutional changes at the DNA level can be attributed to punctuational evolution~\citep{Pagel_Venditti_2006}. In view of the discussion in section~\ref{sec:regular}, there is no incompatibility between models used primarily for the study microevolution (the \emph{Fisher's microscope}~\citep{Waxman_Welch,Frazzetta}) and jumps in the evolutionary process. Here we discuss discontinuous evolution from mixed populations to homogeneous one, without intermediate mixed states, but the word \emph{macroevolution} has many different meanings~\citep{Erwin}; our approach  describe discontinuities in the fossil record~\citep{Frazzetta}, not speciation~\citep{Erwin}.

As a consequence of the discussion in Subsection~\ref{sec:weak}, if the population is large and the fixation probability is the restriction of an increasing smooth function (the same for all sufficiently large $N$), then we are are forced to be in the quasi-neutral (or weak-selection) regime. This conforms to the idea that an allele conferring  great advantage will typically have a large effect (see~\citep{Frazzetta} and references therein), alternatively, if the force of selection is small, the process is regular and the fixation probability is continuous: no jumps are allowed.

We also offer the construction of an algebraic framework to study theoretical population genetics: This idea is not new and can be traced back at least to the Ph.D thesis of Cotterman and Shannon~\citep{Cotterman:1940,Shannon} --- see \citet{Crow:2001} for an interesting historical perspective on these thesis. In particular, we formulate a general theory for evolutionary process in finite populations of haploid type, constant size, without mutations but considering very general effects from natural selection. Most of the modelling in population genetics assume a constant transition matrix between all possible states in a population; namely, they assume a choice of a certain stochastic process (like Moran, Wright-Fisher, pairwise comparison among many others). We want to be able to combine different processes. More precisely, given processes $\mathbf{M}_1$ and $\mathbf{M}_2$, we consider two different possibilities of combining them:
\begin{enumerate}
 \item The convex combination with parameter $\mu\in(0,1)$ of their transition matrices. This is what is known as a mixture of processes, and as observed in Remark~\ref{rmk:mixtures} replaces the random product of these matrices by the corresponding expected value. This yields a time homogeneous mean-field approximation of this stochastic behaviour.
 \item The product of their transition matrices. This represents a time inhomogeneous evolution, with the inhomogeneity being either deterministic --- as in the case of periodical variations --- or random.
\end{enumerate}

Mixtures, or convex combinations, have been already used in modelling evolution in~\citet{EldonWakeley}, but its importance in evolutionary processes seems to be first pointed out by~\citet{der2011generalized}. Inhomogeneous Markov chains have been considered in evolutionary models previously---a recent example is~\citet{Ashcroft:etal:2014}. On the other hand, we are not aware of a unified treatment and the identification of the underlying algebraic structure of transition matrices usually employed in modelling evolution dynamics --- convex semigroup of evolutionary matrices.

We paid particular attention to the behaviour of the fixation probability, in particular to the study of its monotonicity with respect to the initial frequency of a given type. We have also paid special care to guarantee that the composition of admissible processes is also admissible. This led us to the introduction of the Gillespie class as a set of matrices that is closed under multiplication (representing time inhomogeneity in evolution) and convex combinations (representing mixtures) that includes as particular cases the Moran process and the Wright-Fisher process. Furthermore, we define a subset where regularity is preserved under the same operations. Some of the qualitative results obtained in this class of matrices will not depend on the details of the modelling assumption, which is important because neither model is a first-principle model. We have also identified subclasses within the Gillespie class, which preserves regularity under composition.
 
 We also built an evolutionary dynamics version of the Parrondo's paradox (i.e., type \A has a larger fixation probability than type \B, given a certain initial condition, in two different environments but a smaller one in the switching environment) for the Moran process. Still for the Moran process, we showed a situation in which a larger initial frequency implies a large fixation probability in two static environment, but not in the switching case. This is not possible in the Wright-Fisher case. Parrondo's paradox also appear in models related to population genetics, although not directly based in any real example (as in the case of the present work). Considering a model for sexually antagonistic selection,~\citet{Reed_2007}  built an example of a two locus system with epistasis in which  an auotsomal allele can reach fixation  despite a lower average fitness of the alternative allele. In a more general setting Parrondo's paradox may also be used for the study of phenotypical switching~\citep{Fudenberg2012}.
 
This manuscript should also shed some light on the role of the neutral evolution. More specifically: assume a non-neutral Gillespie matrix $\bM$ and a neutral one $\bN$. The associated fixation vectors are $\bF_\bM$ and $\bF_\bN$, respectively, where $\bF_\bN=(0,N^{-1},2N^{-1},\dots,1)$. Stochastic processes given by $\alpha\bM+(1-\alpha)\bN$, $\bM\bN$ and $\bN\bM$ have, in general, fixation vectors that are different from $\bF_\bM$ (and, clearly, from $\bF_\bN$). In this sense, it shows that \emph{neutrality} is a property that strongly depends  on the environment and the interactions within the population. For instance, let us say that the evolution is given by the deterministic environment $\bN\bM$, and therefore we would like to say that in the odd steps, evolution is instantaneously neutral. However, as the effect of the neutral evolution is context-dependent, we may not extend it to the entire process and say that it is neutral part of the time, or that it is neutral with a certain probability. This makes the definition of neutrality used here closer to the concept of \emph{iso-neutrality}, and not to the stronger concept of euneutrality, cf.~\citet{Proulx}. 
 In~\citet{der2011generalized}, the conditional expectation defines neutrality, but the definition of a stronger concept, called ``pure drift (...) process'' requires that the variance is also comparable with of the neutral Wright-Fisher process (the same as in the case of the Moran process, up to multiplicative constants). Therefore, neutral matrices can induce distinctive behaviours in a a stochastic process depending, possibly, in the associated higher-order moments.
 
Diffusion approximations  have a long tradition in population genetics---cf. \citet{ChalubSouza14a} and references therein. More recently, a general limiting Kimura equation has been obtained~\citep{ChalubSouza09b,ChalubSouza14a} where it was termed the ``replicator-diffusion equation''. Considering what was discussed above, an important follow up of the current work would be the derivation of a diffusion approximation for time dependent fitness. Some derivations of time dependent Kimura equation appear in the literature, but they are obtained from semi-heuristic considerations, and not as large population limits of basic stochastic processes~\citep{Uecker,Cvijovic}. A particular question to be tackled in future works is how reversed dominance (i.e., type \A dominates \B in summer and the reverse holds in winter) might be able to generate metastable (quasi-stationary) intermediate states. This would suggest that the existence of metastable states might be a natural consequence of a changing environment, and is not necessarily (as sometimes claimed) a strategy that species develop to deal with uncertain future environment~\citep{Feldman}. This will clearly depend on the ratio between characteristic intergeneration and oscillatory time-scales, and the strength of stochastic effects determining the environment conditions.

\begin{acknowledgements}

FACCC was partially supported by FCT/Portugal Strategic Project UID/MAT/00297/2013 (Centro de Matemática e Aplicações, Universidade Nova de Lisboa) and by a ``Investigador FCT'' grant. FACCC is also indebted to Alexandre Baraviera (Universidade Federal do Rio Grande do Sul, Brazil) and Charles Johnson (College of William and Mary, USA) for useful discussions in preliminary ideas of this work. MOS was partially supported by CNPq under grants  \#~308113/2012-8, \# 486395/2013-8 and  \# 309079/2015-2. MOS also thanks the hospitality of the Universidade Nova de Lisboa and the partial support under grant UID/MAT/00297/2013. MOS further thanks preliminary discussions of some the ideas in this work  with the working group in evolutionary game theory at Universidade Federal Fluminense. Both authors also thank useful comments from Henry Laurie (Cape Town University), Alan Hastings (University of California at Davis), the handling editor, and an anonymous referee which helped to improve the original manuscript.

\end{acknowledgements}

\bibliographystyle{apa}

\end{document}